\documentclass[aps,prl,twocolumn,amsmath,amssymb,showpacs,superscriptaddress,notitlepage,longbibliography]{revtex4-1}
\usepackage[colorlinks=true,linkcolor=blue,anchorcolor=red,citecolor=blue, urlcolor=blue]{hyperref}
\usepackage{bm} 
\usepackage{graphicx}
\usepackage{graphicx}
\usepackage{adjustbox}
\usepackage{color}
\usepackage{xcolor}
\usepackage{soul}
\usepackage{enumerate}
\usepackage{mathrsfs}
\usepackage{tikz}
\usepackage{circuitikz}
\usetikzlibrary{patterns}
\usetikzlibrary{through,hobby}
\usetikzlibrary{decorations.markings}
\usetikzlibrary{fadings,shadings}
\usetikzlibrary{plotmarks}
\usetikzlibrary{positioning}
\usetikzlibrary{decorations,arrows}
\usetikzlibrary{decorations.pathreplacing}
\usetikzlibrary{chains, scopes, positioning, backgrounds, shapes, fit, shadows, calc, arrows.meta}
\usepackage{braket}

\newcommand{\tr}{\text{tr}}

\usepackage{amsthm}
\newtheorem{thm}{Theorem}
\newtheorem{dfn}[]{Definition}

\newtheorem{corollary}[thm]{Corollary}

\newcommand{\N}{\mathcal{N}}
\newcommand{\cl}{\rho_{\mathrm{cl}}}

\begin{document}

\title{Strong-to-weak spontaneous symmetry breaking of higher-form non-invertible symmetries in Kitaev's quantum double model}

\author{Zijian Song}
\email{zjsong.physics@gmail.com}
\affiliation{C. N. Yang Institute for Theoretical Physics, Stony Brook University, Stony Brook, NY 11794, USA}

\author{Jian-Hao Zhang}
\email{Sergio.Zhang@colorado.edu}
\affiliation{Department of Physics and Center for Theory of Quantum Matter, University of Colorado, Boulder, CO 80309, USA}

\begin{abstract}
Topological orders can be understood as spontaneous symmetry breaking of higher-form symmetries. In the non-Abelian case, the broken higher-form symmetries are notably non-invertible. In this work, we extend this framework to mixed states, where symmetries can be either strong or weak. In particular, we investigate the strong-to-weak spontaneous symmetry breaking (SWSSB) of non-invertible higher-form symmetries in non-Abelian Kitaev's quantum double models under decoherence. We further show that the resulting decohered quantum double mixed states form a locally indistinguishable set, which also constitutes an information convex set. Importantly, we emphasize that the dimension of this convex set equals the ground-state degeneracy of the corresponding pure state, highlighting that the quantum information encoded in the ground-state subspace is degraded into classical information captured by the convex set of decohered density matrices.
\end{abstract}

\maketitle

\textit{Introduction} -- Topological phases of matter with different types of symmetry have been extensively studied over the past few decades \cite{Kitaev_2003, Kitaev_2006, Wen_2015, Wen_2017, Johnson_Freyd_2022}, including the stable ground-state degeneracy, long-range entanglement, quasiparticle excitations with exotic braiding statistics, etc. In particular, the stability of topological order (TO) under local perturbations highlights its potential for advancing fault-tolerant quantum computation \cite{Kitaev_2006, freedman2000modularfunctoruniversalquantum, TQM}. Recently, the concept of symmetry has been generalized in many senses: On the one hand, a distinctive feature of mixed quantum states is the distinction between strong (exact) symmetry and weak (average) symmetry. On the other hand, recent developments in quantum field theory reveal that the mathematical structure of symmetry is not limited to groups. Symmetry operators can be generalized to topological operators in spacetime, which may be \textit{non-invertible}. This extends the concept of symmetry from groups to fusion categories~\cite{schafernameki2023ictp, shao2024tasi, Luo_2024, thorngren2019fusion, thorngren2021fusion, Choi_2022, Bhardwaj_2022, Zhang2023anomaly, Cordova_2024}. Furthermore, even topological quantum phases once thought to lie beyond Landau’s symmetry-breaking classification have been incorporated into the traditional paradigm through a generalized notion of symmetry \cite{McGreevy_2023, Gaiotto_2015}. 

Nevertheless, in realistic settings, physical systems are inevitably coupled to their environments and must be described by mixed states rather than by pure states, as in isolated systems. This makes the study of phases of matter in mixed states a subject of both fundamental and practical significance. Recent work has uncovered a variety of mixed-state quantum phases, including strong-to-weak spontaneous symmetry breaking (SWSSB) phases, averaged symmetry-protected topological (ASPT) phases, finite-temperature topological phases, and intrinsic topological phases that arise only in open quantum systems~\cite{buvca2012note,de_Groot_2022, MaWangASPT, Ma2025topological, fan2024diagnostics, LeeYouXu2022, ZhangQiBi2022, Lee_2023, Zou_2023, Zhang_2023, ma2024symmetry, xue2024tensor, guo2024locally, guo2024designing, chen2023symmetryenforced, chen2024separability, chen2024unconventional, lessa2024mixedstate, wang2023intrinsic, wang2024anomaly, sala2024spontaneous, Lessa_2025, huang2024hydro, OgunnaikeFeldmeierLee2023, zhang2024quantumcommunicationmixedstateorder, zhang2024fluctuation,
Ellison_2025, Sohal_2025, zhang2024strong, Sala_2025, Sun_2025, Stephen_2025, Su_2024, sang2024mixed, sang2024stability, moharramipour2024symmetry, gu2024spontaneous, guo2024new, luo2025topological, schafernameki2025symtft, qi2025symmetrytaco, aldossari2025tensor, yang2025topological,sun2025anomalous, ma2025circuit}. 

Previous studies of mixed-state topological orders have primarily focused on Abelian cases. Recently, decoherence of Kitaev's quantum double model attracted numerous attentions \cite{Sohal_2025, Ellison_2025, yang2025topological}. However, lattice realization of SWSSB mechanism on these systems remains unidentified. In this work, we investigate the (untwisted) quantum double model $\mathcal{D}(G)$ for a general non-Abelian finite group $G$ subject to external decoherence. We show that the decohered quantum double model exhibits SWSSB of the strong 1-form non-invertible symmetry generated by closed ribbon operators, and weak-to-trivial spontaneous symmetry breaking (WTSSB) of the weak 1-form non-invertible symmetry generated by certain other closed ribbon operators. We further prove that the decohered quantum double mixed states form a locally indistinguishable set. This information convex set stores classical information arising from the decoherence of quantum
information encoded in the ground-state subspace of the quantum double model.

\textit{A brief review of quantum double $\mathcal{D}(G)$} -- The quantum double $\mathcal{D}(G)$ can be defined on an oriented square lattice $\Lambda=(V,E,F)$ by associating each edge with the Hilbert space $\mathbb{C}[G]$ spanned by $\{|g\rangle | g \in G \}$. The commuting projector Hamiltonian is given as follows, 
\begin{align}
    H=-\sum_{v\in V} A_v -\sum_{p\in F} B_p,
    \label{Eq: Hamiltonian}
\end{align}
where $A_v=\frac{1}{|G|}\sum_{g \in G} A_v^g$. The action of the Hamiltonian terms $A_v^g$ and $B_p$ are defined as follows, 
\begin{equation}
\begin{tikzpicture}[scale=1.15]
\tikzstyle{sergio}=[rectangle,draw=none]

\draw (-1,0)--(0,0) node[currarrow,pos=0.5,sloped]{};
\draw (0,0)--(1,0) node[currarrow,pos=0.5,sloped]{};
\draw (0,0)--(0,-1) node[currarrow,pos=0.5,sloped]{};
\draw (0,1)--(0,0) node[currarrow,pos=0.5,sloped]{};

\path (-1.4,0) node [style=sergio]{$A_v^g$};
\path (-0.5,0.2) node [style=sergio]{$x_1$};
\path (0.5,0.2) node [style=sergio]{$x_2$};
\path (-0.3,0.6) node [style=sergio]{$y_1$};
\path (-0.3,-0.4) node [style=sergio]{$y_2$};
\path (1.5,0) node [style=sergio]{$=$};

\draw (2,0)--(3,0) node[currarrow,pos=0.5,sloped]{};
\draw (3,0)--(4,0) node[currarrow,pos=0.5,sloped]{};
\draw (3,0)--(3,-1) node[currarrow,pos=0.5,sloped]{};
\draw (3,1)--(3,0) node[currarrow,pos=0.5,sloped]{};

\path (2.5,0.2) node [style=sergio]{$x_1 \bar{g}$};
\path (3.5,0.2) node [style=sergio]{$g x_2$};
\path (3-0.3,0.6) node [style=sergio]{$y_1 \bar{g}$};
\path (3-0.3,-0.4) node [style=sergio]{$g y_2$};

\path (0.1,0.15) node [style=sergio]{$v$};
\path (3.1, 0.15) node [style=sergio]{$v$};

\draw (-0.5-0.2,0.5-2)--(0.5-0.2,0.5-2) node[currarrow,pos=0.5,sloped]{};
\draw (0.5-0.2,0.5-2)--(0.5-0.2,-0.5-2) node[currarrow,pos=0.5,sloped]{};
\draw (-0.5-0.2,0.5-2)--(-0.5-0.2,-0.5-2) node[currarrow,pos=0.5,sloped]{};
\draw (-0.5-0.2,-0.5-2)--(0.5-0.2,-0.5-2) node[currarrow,pos=0.5,sloped]{};

\path (-1.2-0.2,0-2) node [style=sergio]{$B_p$};
\path (-0.7-0.2,0-2) node [style=sergio]{$y_1$};
\path (0.7-0.2,0-2) node [style=sergio]{$y_2$};
\path (0-0.2,0.7-2) node [style=sergio]{$x_1$};
\path (0-0.2,-0.7-2) node [style=sergio]{$x_2$};
\path (1.5,0-2) node [style=sergio]{\small$= \delta_{x_1 y_2 \bar{x}_2 \bar{y}_1, e}$};

\draw (-0.5+3.2,0.5-2)--(0.5+3.2,0.5-2) node[currarrow,pos=0.5,sloped]{};
\draw (0.5+3.2,0.5-2)--(0.5+3.2,-0.5-2) node[currarrow,pos=0.5,sloped]{};
\draw (-0.5+3.2,0.5-2)--(-0.5+3.2,-0.5-2) node[currarrow,pos=0.5,sloped]{};
\draw (-0.5+3.2,-0.5-2)--(0.5+3.2,-0.5-2) node[currarrow,pos=0.5,sloped]{};

\path (-0.7+3.2,0-2) node [style=sergio]{$y_1$};
\path (0.7+3.2,0-2) node [style=sergio]{$y_2$};
\path (0+3.2,0.7-2) node [style=sergio]{$x_1$};
\path (0+3.2,-0.7-2) node [style=sergio]{$x_2$};

\path (0.1-0.3,0.15-0.5-2) node [style=sergio]{$v$};
\path (3.1-0.3, 0.15-0.5-2) node [style=sergio]{$v$};

\end{tikzpicture},\label{Eq: AvBp}
\end{equation}
where $\bar{g} = g^{-1}$. The ground state satisfies $\ket{\psi_{\mathrm{QD}}}=A_v\ket{\psi_{\mathrm{QD}}}=B_p\ket{\psi_{\mathrm{QD}}}$. In particular, in the group basis $\left\{\ket{g}\big|g\in G\right\}$, we could reformulate the vertex and plaquette terms $A_v$ and $B_p$ in terms of generalized Pauli $X$ and $Z$ operators, namely,
\begin{align}
\begin{gathered}
L_h^{+}=\sum_{g\in G}\ket{hg}\bra{g},~L_h^{-}=\sum_{g\in G}\ket{g\bar{h}}\bra{g},\\
Z_\Gamma=\sum_{g\in G}\Gamma(g)\otimes\ket{g}\bra{g}.
\end{gathered}
\label{Eq: X and Z}
\end{align}
Therefore, we have
\begin{align}
L_h^{+}\ket{g}=\ket{hg},~L_h^{-}\ket{g}=\ket{g\bar{h}},~Z_\Gamma\ket{g}=\Gamma(g)\ket{g},
\end{align}
where $\Gamma\in \mathrm{Rep}(G)$ is an irreducible representation (irrep) of the group $G$.

The vertex term $A_v^g$ [cf. Eq. \eqref{Eq: AvBp}] can thus be reformulated as follows,
\begin{align}
A_v^g=\prod_{e\ni v}L_{g}^{\pm},
\label{Eq: AvX}
\end{align}
where $\pm$ is determined by the orientation of the lattice. Similarly, because of the great orthogonality theorem \cite{supp}, the plaquette term $B_p$ [cf. Eq. \eqref{Eq: AvBp}] can be reformulated in terms of $Z_\Gamma$,
\begin{align}
B_p=\frac{1}{|G|}\sum_{\Gamma\in\mathrm{Rep}(G)}d_\Gamma\cdot\tr\left(\prod_{e\in p}Z_{\Gamma,e}^{\pm}\right),
\label{Eq: BpZ}
\end{align}
where $d_\Gamma$ is the dimension of the irrep $\Gamma$, $Z_\Gamma^+=Z_\Gamma$, $Z_\Gamma^{-}=Z_\Gamma^\dag$, and $\pm$ is determined by the orientation of the lattice. $\tr \left(\prod_{e\in p}Z_{\Gamma,e}^{\pm}\right)$ is a matrix product operator (MPO), where the trace is taken over the virtual space.

\begin{figure}
\begin{tikzpicture}[scale=1.2]
\tikzstyle{sergio}=[rectangle,draw=none]
\filldraw[fill=gray!50, draw=gray, dotted] (-3.5,-4.5) -- (-3,-4) -- (0,-4) -- (0.5,-4.5) -- cycle;
\draw[draw=gray, dotted] (-3,-4) -- (-2.5,-4.5) -- (-2,-4) -- (-1.5,-4.5) -- (-1,-4) -- (-0.5,-4.5) -- (0,-4);
\draw (-4,-4)--(-3,-4) node[currarrow, pos=0.5, sloped] {};
\draw (-3,-4)--(-2,-4) node[currarrow, pos=0.5, sloped] {};
\draw (-2,-4)--(-1,-4) node[currarrow, pos=0.5, sloped] {};
\draw (-1,-4)--(0,-4) node[currarrow, pos=0.5, sloped] {};
\draw (0,-4)--(1,-4) node[currarrow, pos=0.5, sloped] {};
\draw (-4,-4)--(-4,-5) node[currarrow, pos=0.5, sloped] {};
\draw (-3,-4)--(-3,-5) node[currarrow, pos=0.5, sloped] {};
\draw (-2,-4)--(-2,-5) node[currarrow, pos=0.5, sloped] {};
\draw (-1,-4)--(-1,-5) node[currarrow, pos=0.5, sloped] {};
\draw (0,-4)--(0,-5) node[currarrow, pos=0.5, sloped] {};
\draw (1,-4)--(1,-5) node[currarrow, pos=0.5, sloped] {};
\draw (-4,-5)--(-3,-5) node[currarrow, pos=0.5, sloped] {};
\draw (-3,-5)--(-2,-5) node[currarrow, pos=0.5, sloped] {};
\draw (-2,-5)--(-1,-5) node[currarrow, pos=0.5, sloped] {};
\draw (-1,-5)--(0,-5) node[currarrow, pos=0.5, sloped] {};
\draw (0,-5)--(1,-5) node[currarrow, pos=0.5, sloped] {};
\path (-3.5,-4.25) node [style=sergio]{$s_0$};
\path (-2.5,-3.8) node [style=sergio]{$x_1$};
\path (-1.5,-3.8) node [style=sergio]{$x_2$};
\path (-0.5,-3.8) node [style=sergio]{$x_3$};
\path (-3,-4.8) node [style=sergio]{\small $hy_1$};
\path (-2,-4.8) node [style=sergio]{\small $\bar{x}_1'h{x}_1'y_2$};
\path (-0.8,-4.8) node [style=sergio]{\small $\bar{x}_2'h{x}_2'y_3$};
\path (0.4,-4.8) node [style=sergio]{\small $\bar{x}_3'h{x}_3'y_4$};
\path (0.5,-4.25) node [style=sergio]{$s_1$};
\path (-5,-4.5) node [style=sergio]{\large $=\delta_{g,x_1x_2x_3}$};
\filldraw[fill=gray!50, draw=gray, dotted] (-3.5,-4.5+1.5) -- (-3,-4+1.5) -- (0,-4+1.5) -- (0.5,-4.5+1.5) -- cycle;
\draw[draw=gray, dotted] (-3,-4+1.5) -- (-2.5,-4.5+1.5) -- (-2,-4+1.5) -- (-1.5,-4.5+1.5) -- (-1,-4+1.5) -- (-0.5,-4.5+1.5) -- (0,-4+1.5);
\draw (-4,-4+1.5)--(-3,-4+1.5) node[currarrow, pos=0.5, sloped] {};
\draw (-3,-4+1.5)--(-2,-4+1.5) node[currarrow, pos=0.5, sloped] {};
\draw (-2,-4+1.5)--(-1,-4+1.5) node[currarrow, pos=0.5, sloped] {};
\draw (-1,-4+1.5)--(0,-4+1.5) node[currarrow, pos=0.5, sloped] {};
\draw (0,-4+1.5)--(1,-4+1.5) node[currarrow, pos=0.5, sloped] {};
\draw (-4,-4+1.5)--(-4,-5+1.5) node[currarrow, pos=0.5, sloped] {};
\draw (-3,-4+1.5)--(-3,-5+1.5) node[currarrow, pos=0.5, sloped] {};
\draw (-2,-4+1.5)--(-2,-5+1.5) node[currarrow, pos=0.5, sloped] {};
\draw (-1,-4+1.5)--(-1,-5+1.5) node[currarrow, pos=0.5, sloped] {};
\draw (0,-4+1.5)--(0,-5+1.5) node[currarrow, pos=0.5, sloped] {};
\draw (1,-4+1.5)--(1,-5+1.5) node[currarrow, pos=0.5, sloped] {};
\draw (-4,-5+1.5)--(-3,-5+1.5) node[currarrow, pos=0.5, sloped] {};
\draw (-3,-5+1.5)--(-2,-5+1.5) node[currarrow, pos=0.5, sloped] {};
\draw (-2,-5+1.5)--(-1,-5+1.5) node[currarrow, pos=0.5, sloped] {};
\draw (-1,-5+1.5)--(0,-5+1.5) node[currarrow, pos=0.5, sloped] {};
\draw (0,-5+1.5)--(1,-5+1.5) node[currarrow, pos=0.5, sloped] {};
\path (-3.5,-4.25+1.5) node [style=sergio]{$s_0$};
\path (-2.5,-3.8+1.5) node [style=sergio]{$x_1$};
\path (-1.5,-3.8+1.5) node [style=sergio]{$x_2$};
\path (-0.5,-3.8+1.5) node [style=sergio]{$x_3$};
\path (0.5,-2.75) node [style=sergio]{$s_1$};
\path (-3,-2.8-0.5) node [style=sergio]{$y_1$};
\path (-2,-2.8-0.5) node [style=sergio]{$y_2$};
\path (-1,-2.8-0.5) node [style=sergio]{$y_3$};
\path (-0,-2.8-0.5) node [style=sergio]{$y_4$};
\path (-4.7,-3) node [style=sergio]{\large $F_{\xi,\Delta}^{h,g}$};
\filldraw[fill=gray!50, draw=gray, dotted] (-3.5,-4.5+3.5) -- (-3,-5+3.5) -- (0,-5+3.5) -- (0.5,-4.5+3.5) -- cycle;
\draw[draw=gray, dotted] (-3,-4+3.5) -- (-2.5,-4.5+3.5) -- (-2,-4+3.5) -- (-1.5,-4.5+3.5) -- (-1,-4+3.5) -- (-0.5,-4.5+3.5) -- (0,-4+3.5);
\draw (-4,-4+3.5)--(-3,-4+3.5) node[currarrow, pos=0.5, sloped, rotate=180] {};
\draw (-3,-4+3.5)--(-2,-4+3.5) node[currarrow, pos=0.5, sloped, rotate=180] {};
\draw (-2,-4+3.5)--(-1,-4+3.5) node[currarrow, pos=0.5, sloped, rotate=180] {};
\draw (-1,-4+3.5)--(0,-4+3.5) node[currarrow, pos=0.5, sloped, rotate=180] {};
\draw (0,-4+3.5)--(1,-4+3.5) node[currarrow, pos=0.5, sloped, rotate=180] {};
\draw (-4,-5+3.5)--(-4,-4+3.5) node[currarrow, pos=0.5, sloped] {};
\draw (-3,-5+3.5)--(-3,-4+3.5) node[currarrow, pos=0.5, sloped] {};
\draw (-2,-5+3.5)--(-2,-4+3.5) node[currarrow, pos=0.5, sloped] {};
\draw (-1,-5+3.5)--(-1,-4+3.5) node[currarrow, pos=0.5, sloped] {};
\draw (0,-5+3.5)--(0,-4+3.5) node[currarrow, pos=0.5, sloped] {};
\draw (1,-5+3.5)--(1,-4+3.5) node[currarrow, pos=0.5, sloped] {};
\draw (-4,-5+3.5)--(-3,-5+3.5) node[currarrow, pos=0.5, sloped, rotate=180] {};
\draw (-3,-5+3.5)--(-2,-5+3.5) node[currarrow, pos=0.5, sloped, rotate=180] {};
\draw (-2,-5+3.5)--(-1,-5+3.5) node[currarrow, pos=0.5, sloped, rotate=180] {};
\draw (-1,-5+3.5)--(0,-5+3.5) node[currarrow, pos=0.5, sloped, rotate=180] {};
\draw (0,-5+3.5)--(1,-5+3.5) node[currarrow, pos=0.5, sloped, rotate=180] {};
\path (-3.5,-4.25+3) node [style=sergio]{$s_0$};
\path (-2.5,-3.8+2.1) node [style=sergio]{$x_1$};
\path (-1.5,-3.8+2.1) node [style=sergio]{$x_2$};
\path (-0.5,-3.8+2.1) node [style=sergio]{$x_3$};
\path (-3,-4.8+4) node [style=sergio]{\small $y_1h$};
\path (-2,-4.8+4) node [style=sergio]{\small $y_2\bar{x}_1'h{x}_1'$};
\path (-0.8,-4.8+4) node [style=sergio]{\small $y_3\bar{x}_2'h{x}_2'$};
\path (0.4,-4.8+4) node [style=sergio]{\small $y_4\bar{x}_3'h{x}_3'$};
\path (0.5,-4.25+3) node [style=sergio]{$s_1$};
\path (-5,-4.5+3.5) node [style=sergio]{\large $=\delta_{g,\overline{x_1 x_2 x_3}}$};
\filldraw[fill=gray!50, draw=gray, dotted] (-3.5,-4.5+1.5+3.5) -- (-3,-5+1.5+3.5) -- (0,-5+1.5+3.5) -- (0.5,-4.5+1.5+3.5) -- cycle;
\draw[draw=gray, dotted] (-3,-4+1.5+3.5) -- (-2.5,-4.5+1.5+3.5) -- (-2,-4+1.5+3.5) -- (-1.5,-4.5+1.5+3.5) -- (-1,-4+1.5+3.5) -- (-0.5,-4.5+1.5+3.5) -- (0,-4+1.5+3.5);
\draw (-4,-4+1.5+3.5)--(-3,-4+1.5+3.5) node[currarrow, pos=0.5, sloped, rotate=180] {};
\draw (-3,-4+1.5+3.5)--(-2,-4+1.5+3.5) node[currarrow, pos=0.5, sloped, rotate=180] {};
\draw (-2,-4+1.5+3.5)--(-1,-4+1.5+3.5) node[currarrow, pos=0.5, sloped, rotate=180] {};
\draw (-1,-4+1.5+3.5)--(0,-4+1.5+3.5) node[currarrow, pos=0.5, sloped, rotate=180] {};
\draw (0,-4+1.5+3.5)--(1,-4+1.5+3.5) node[currarrow, pos=0.5, sloped, rotate=180] {};
\draw (-4,-5+1.5+3.5)--(-4,-4+1.5+3.5) node[currarrow, pos=0.5, sloped] {};
\draw (-3,-5+1.5+3.5)--(-3,-4+1.5+3.5) node[currarrow, pos=0.5, sloped] {};
\draw (-2,-5+1.5+3.5)--(-2,-4+1.5+3.5) node[currarrow, pos=0.5, sloped] {};
\draw (-1,-5+1.5+3.5)--(-1,-4+1.5+3.5) node[currarrow, pos=0.5, sloped] {};
\draw (0,-5+1.5+3.5)--(0,-4+1.5+3.5) node[currarrow, pos=0.5, sloped] {};
\draw (1,-5+1.5+3.5)--(1,-4+1.5+3.5) node[currarrow, pos=0.5, sloped] {};
\draw (-4,-5+1.5+3.5)--(-3,-5+1.5+3.5) node[currarrow, pos=0.5, sloped, rotate=180] {};
\draw (-3,-5+1.5+3.5)--(-2,-5+1.5+3.5) node[currarrow, pos=0.5, sloped, rotate=180] {};
\draw (-2,-5+1.5+3.5)--(-1,-5+1.5+3.5) node[currarrow, pos=0.5, sloped, rotate=180] {};
\draw (-1,-5+1.5+3.5)--(0,-5+1.5+3.5) node[currarrow, pos=0.5, sloped, rotate=180] {};
\draw (0,-5+1.5+3.5)--(1,-5+1.5+3.5) node[currarrow, pos=0.5, sloped, rotate=180] {};
\path (-3.5,-4.25+1+3.5) node [style=sergio]{$s_0$};
\path (-2.5,-3.8+1.5+2.1) node [style=sergio]{$x_1$};
\path (-1.5,-3.8+1.5+2.1) node [style=sergio]{$x_2$};
\path (-0.5,-3.8+1.5+2.1) node [style=sergio]{$x_3$};
\path (0.5,-2.25+2.5) node [style=sergio]{$s_1$};
\path (-3,-2.8+3.5) node [style=sergio]{$y_1$};
\path (-2,-2.8+3.5) node [style=sergio]{$y_2$};
\path (-1,-2.8+3.5) node [style=sergio]{$y_3$};
\path (-0,-2.8+3.5) node [style=sergio]{$y_4$};
\path (-4.7,-2.5+3) node [style=sergio]{\large $F_{\xi,\nabla}^{h,g}$};

\draw[draw=gray, dotted] (-4,-5.5) -- (-3.5,-6) -- (-3,-5.5);
\filldraw[fill=gray!50, draw=gray, dotted] (-4,-5.5) -- (-3.5,-6) -- (-3,-5.5);
\path (-3.5,-5.7) node [style=sergio]{ $\tau$};

\draw (-4,-5.5)--(-3,-5.5);
\draw (-4,-6.5)--(-3,-6.5);
\draw (-4,-5.5)--(-4,-6.5);
\draw (-3,-5.5)--(-3,-6.5);

\draw[draw=gray, dotted] (-2,-5.5) -- (-1.5,-6) -- (-2,-5.5);
\draw[draw=gray, dotted] (2-2,-5.5) -- (1-1.5,-6) -- (1-2,-5.5);
\filldraw[fill=gray!50, draw=gray, dotted] (0.5-2,-6) -- (0.5-1.5,-5.5) -- (0.5-1,-6);
\path (-1.2,-5.8) node [style=sergio]{ $\tau*$};

\draw (2-4,-5.5)--(2-3,-5.5);
\draw (2-4,-6.5)--(2-3,-6.5);
\draw (2-4,-5.5)--(2-4,-6.5);
\draw (2-3,-5.5)--(2-3,-6.5);

\draw (3-4,-5.5)--(3-3,-5.5);
\draw (3-4,-6.5)--(3-3,-6.5);
\draw (3-3,-5.5)--(3-3,-6.5);

\end{tikzpicture}
    \caption{The action of ribbon operators $F_{\xi,\Delta}^{(h,g)}$ and $F_{\xi,\nabla}^{(h,g)}$ starts from $s_0$ to $s_1$. Here we have $x_i'=x_1\cdots x_i$. On the bottom, we show examples of a direct triangle (left) and a dual triangle (right). Each triangle is associated with one qudit living on the corresponding edge.}
    \label{fig:Ribbon_1}
\end{figure}
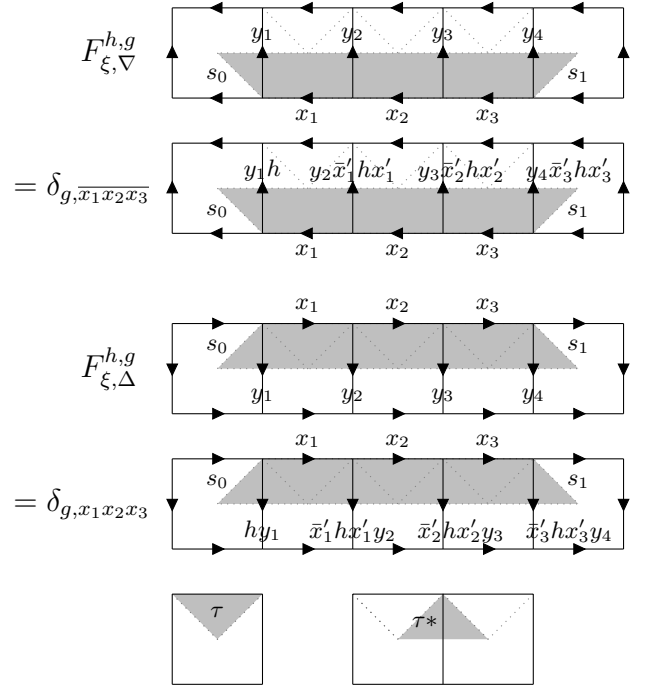

The definition of ribbon operators is illustrated in Fig.~\ref{fig:Ribbon_1}. In particular, a ribbon $\xi$ consists of a sequence of direct and dual triangles along the path. For ribbon operators of the form $F_\xi^{e,g}$, the action on every dual triangle $\tau^* \in \xi$ is trivial, allowing us to reformulate the ribbon operators in terms of generalized $Z$ operators, namely
\begin{align}
F_\xi^{e,g}=\frac{1}{|G|}\sum_{\Gamma\in\mathrm{Rep}(G)}d_\Gamma\cdot\tr\left(\Gamma(\bar{g})\prod_{\tau\in\xi}Z_{\Gamma,\tau}^{\pm}\right),
\label{Eq: magnetic ribbon}
\end{align}
where the product is taken over all qudits on the direct triangles along the path. The action of the ribbon operator $F_\xi^{e,g}$ yields a Kronecker delta $\delta \left(g, \prod_{\tau \in \xi} g_{e_\tau}\right)$ by the great orthogonality theorem, consistent with the definition in Fig.~\ref{fig:Ribbon_1}. The Fourier transforms of the ribbon operators, corresponding to different electric anyon excitations, are defined as
\begin{align}
F^{\Gamma}_{\xi} = \frac{d_{\Gamma}}{|G|} \sum_{g \in G} \chi_\Gamma(\bar{g}) F^{(e,g)}_{\xi},
\label{Eq: weak sym}
\end{align}
where $\chi_\Gamma$ is the character of the representation $\Gamma$. We note that these operators are non-invertible, and that the collection of all $F^{\Gamma}_{\xi}$ for different $\Gamma$ forms the fusion category $\mathrm{Rep}(G)$.

\textit{Decohered quantum double $\mathcal{D}(G)$} -- Consider the local decoherence described by the quantum channel $\mathcal{N} = \bigotimes_e \mathcal{N}_e$, such that
\begin{align}
    \mathcal{N}_e[\rho] = \frac{1}{|G|} \sum_{\Gamma \in \mathrm{Rep}(G)}\sum_{\alpha,\alpha'} d_\Gamma Z_{\Gamma,e,\alpha,\alpha'} \rho\left(Z_{\Gamma,e,\alpha,\alpha'}\right)^{\dagger},
    \label{Eq: decohered QD}
\end{align}
and 
\begin{equation}
    \begin{aligned}
        \cl=\N[\rho_{\mathrm{QD}}],
    \end{aligned}
\end{equation}
where $Z_{\Gamma,e,\alpha,\alpha'}$ is the $(\alpha,\alpha')$ matrix element of $Z_\Gamma$ acting on the edge $e$. 

These Kraus operators define a valid quantum channel (CPTP map), as it satisfies the following property:
\begin{equation}
    \begin{aligned}
        \sum_{\Gamma \in \mathrm{Rep}(G)} \frac{d_\Gamma \tr\left(Z_{\Gamma} Z^{\dagger}_{\Gamma}\right)}{|G|} =  \sum_{\Gamma \in \mathrm{Rep}(G)} \frac{d_\Gamma^2}{|G|} = 1
    \end{aligned}
\end{equation}

We refer to this type of decoherence as the \textit{$Z$-type error}. Physically, $Z_{\Gamma}$ creates electric charges in the quantum double, and the fusion of these charges generates the subcategory $\mathrm{Rep}(G)$ within the original anyon theory $\mathcal{Z}(\mathrm{Rep}(G))$.

Because electric charges braid trivially, it is straightforward to verify that $\cl$ possesses strong 1-form symmetries $F_\xi^\Gamma$ for any closed ribbon $\xi$, namely
\begin{align}
F_\xi^\Gamma\N_e[\rho]&=\frac{1}{|G|}\sum_{\Gamma',\alpha,\alpha'}d_{\Gamma'} Z_{\Gamma',e,\alpha,\alpha'}(F_\xi^\Gamma)\rho(Z_{\Gamma',e,\alpha,\alpha'})^\dag\nonumber\\
&=\N_e[\rho].
\end{align}

Therefore, the ribbon operators $\left\{F_\xi^{\Gamma}\big|\Gamma\in\mathrm{Rep}(G)\right\}$ define \textit{strong 1-form non-invertible symmetries}. 

The density matrix $\cl$ also has \textit{weak 1-form non-invertible symmetries}. Consider a closed ribbon $\xi$, we have 
\begin{align}
    F^{C}_{\xi} = \tr \left( \left(F^{C}_{\xi}\right)_{i,i'} \right) := \frac{1}{|Z_C|} \sum_{k \in Z_C, i} F_{\xi}^{c_i^{-1}, p_i k p_{i}^{-1}},
\label{Eq: strong sym}
\end{align}
in which $C$ represents a conjugacy class, $Z_C$ is the centralizer group of an element $r_C \in C$, and $i, i' \in \{1, \ldots, |C| \}$ are the indices for the elements of the conjugacy class and the matrix row/column indices respectively. Lastly, we choose $\{p_i\}_{i=1}^{|C|} \in G$ such that $c_i = p_i r_C p_i^{-1} \in C$. The ribbon operator defined above is the magnetic anyon ribbon operator of the $\mathcal{D}(G)$ quantum double. Specifically, if Eq.~\eqref{Eq: strong sym} is acting on the flux-free state of quantum double, it can be further simplified to the following form,
\begin{equation}
    \begin{aligned}
        F^{C}_{\xi} = \frac{1}{|Z_C|} \sum_{i} F_{\xi}^{c_i^{-1},e}.
    \end{aligned}
\end{equation}

We have the following properties,
\begin{align}
    F^C_{\xi} \rho_{\mathrm{QD}} \propto \rho_{\mathrm{QD}},\quad
    \sum_{i,i'} (F_{\xi, i, i'
    }^C ) \cl (F_{\xi, i, i'}^C)^{\dagger} \propto \cl.
\end{align}
The first equation is based on the fact that a closed anyonic ribbon operator $F^C_{\xi}$ commutes with every Hamiltonian terms. The second equation follows from the commutation relation between $X$ and $Z$ types of operators defined in Eq.~\eqref{Eq: X and Z}: the commutator between them is an irrep; if we take this commutator on both sides of the density matrix, the irrep and its inverse cancel with each other. Therefore, the ribbon operators {$\left\{F_\xi^C\right\}$} define \textit{weak 1-form non-invertible symmetries}. 

\begin{figure}
\begin{tikzpicture}[use Hobby shortcut,xscale=1.5,yscale=1.5]
\tikzstyle{sergio}=[rectangle,draw=none]
\coordinate[] (A6) at (1,0.7);
\coordinate[] (A7) at (2,0.6);
\coordinate[] (A8) at (2.5,0.8);
\coordinate[] (A9) at (3,1.2);
\coordinate[] (A10) at (3.8,1.3);
\draw[color=green!40,line width=7pt] (A6) .. (A7) .. (A8) .. (A9) .. (A10);
\path (4.1,1.3) node [style=sergio]{\large$s_1$};
\path (0.7,0.7) node [style=sergio]{\large$s_0$};
\draw[red!30, line width=7pt, dash pattern={on 6pt off 3pt}] (3.8,1.3)circle (1);
\path (2,1.1) node [style=sergio]{\large$F_\eta^C$};
\path (3.8, -0.2) node [style=sergio]{\large$F_\xi^\Gamma=\chi_\Gamma(r_C)$};
\end{tikzpicture}
\caption{'t Hooft anomaly/SSB of 1-form ribbon symmetries. The open green ribbon depicts $F_\eta^C$, the closed dashed red ribbon depicts $F_\xi^\Gamma$, and $r_C\in C$.}
\label{Fig: Ribbons}
\end{figure}
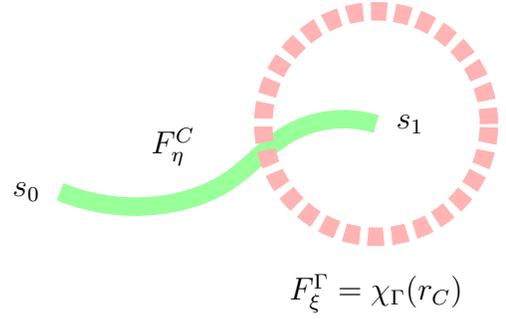

\textit{Spontaneous breaking of 1-form non-invertible symmetry} -- For TOs, topological degeneracy arises from the nontrivial braiding statistics of anyons. In the context of generalized symmetry, TOs result from the SSB of higher-form symmetries defined on codimension-1 manifolds. In this section, we demonstrate that the nontrivialness of $\cl$ is captured by the SSB of ribbon symmetries defined in Eq.~\eqref{Eq: weak sym} and Eq.~\eqref{Eq: strong sym}. 

The strong and weak ribbon symmetries manifest a ’t Hooft anomaly. If the closed ribbon operator $F_{\xi_0}^C$ is truncated to an open ribbon $\eta$ with endpoints $s_0$ and $s_1$, the resulting open ribbon operator $F_\eta^C$ creates magnetic anyons at $s_0$ and $s_1$. Another closed ribbon operator $F_\xi^\Gamma$ can then detect the presence of these excitations (see Fig.~\ref{Fig: Ribbons}), namely
\begin{align}
F_\xi^\Gamma F_\eta^C\cl(F_\eta^C)^\dag=\chi_\Gamma(r_C) F_\eta^C F_\xi^\Gamma\cl(F_\eta^C)^\dag,
\label{Eq: anomaly}
\end{align}
Eq.~\eqref{Eq: anomaly} shows the 't Hooft anomaly between the strong and weak 1-form ribbon symmetries. The anomaly structure implies the WTSSB of the weak 1-form ribbon symmetry $F_{\eta}^C$, with $F_\xi^\Gamma$ being the ribbon order parameter. Alternatively, by treating $F_\xi^\Gamma$ as a strong symmetry, the \textit{fidelity observable}
\begin{align}
F\left(\cl,\frac{1}{|C|}\sum_{i,i'} (F_{\eta, i, i'}^C)\cl (F_{\eta, i, i'}^C)^{\dagger}\right)=1,
\end{align}
implies the SWSSB of the strong 1-form ribbon symmetry $F_\xi^\Gamma$, with $F_{\eta}^C$ as the ribbon order parameter. 


In Abelian quantum double models, both $Z$ and $X$ types of errors can induce the SWSSB of 1-form symmetries. Nevertheless, we demonstrate that for Kitaev's quantum double model with non-Abelian group $G$, the generic $X$-type decoherence leave no strong symmetry. Consider the $X$-type decoherence described by the quantum channel $\N'=\bigotimes_{e}\N_e'$, and
\begin{align}
\N_e'[\rho]=\frac{1}{|G|}\sum_{g\in G}L_{g,e}^{+}\rho(L_{g,e}^+)^\dag.
\end{align}
It is straightforward to see that the ribbon operator in Eq.~\eqref{Eq: weak sym} fails to commute with the Kraus operator $L_{g,e}^+$. For Eq.~\eqref{Eq: strong sym}, only those ribbon operators associated with conjugacy classes in the center of the group retain strong symmetry, while all others are destroyed. In other words, generic $X$-type errors eliminate all strong symmetries except those corresponding to the group center.

Physically, when the Kraus operators are chosen as local anyon creation operators, analogous to anyon condensation, the quantum channel incoherently proliferates the corresponding anyons~\cite{Ellison_2025}. If the anyons generated by these Kraus operators form a subcategory of $\mathcal{Z}(\mathrm{Rep}(G))$ in which all corresponding anyons braid trivially, then the associated strong symmetry is preserved. On the other hand, generic $X$-type operators in non-Abelian quantum double generate superpositions of magnetic anyons, which in general do not form a subcategory. Consequently, under generic $X$-type decoherence, the corresponding strong symmetries are broken.

\textit{Locally indistinguishable set} -- Subsequently, we show that the decohered states $\{\cl\}$ form a \textit{locally indistinguishable set} \cite{sang2025mixed}. The formal definition of the locally indistinguishable set is reviewed as follows

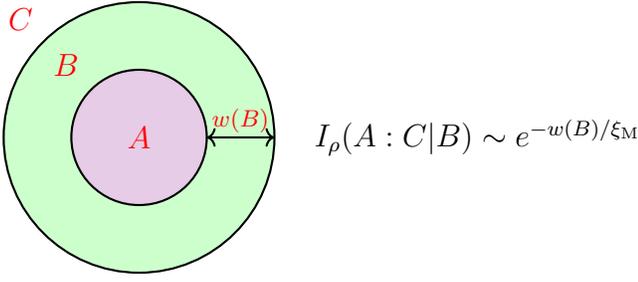
\begin{figure}
\begin{tikzpicture}[scale=0.9]
\tikzstyle{sergio}=[rectangle,draw=none]
\filldraw[fill=green!20, draw=black, thick] (0,0)circle (2);
\filldraw[fill=violet!20, draw=black, thick] (0,0)circle (1);
\path (0,0) node [style=sergio,color=red] {\large$A$};
\path (1.5,0.25) node [style=sergio,color=red] {$w(B)$};
\path (-1.075,1.075) node [style=sergio,color=red] {\large$B$};
\path (-1.75,1.75) node [style=sergio,color=red] {\large$C$};
\path (5,0) node [style=sergio] {\large $I_\rho(A:C|B)\sim e^{-w(B)/\xi_{\mathrm{M}}}$};
\draw[thick, <->] (1,0) -- (2,0);
\end{tikzpicture}
\caption{A tripartition $A$, $B$, and $C$ of the system. Here $B$ is the buffer region of $A$ with the width $w(B)$.}
\label{Fig: tripartition}
\end{figure}

\begin{dfn}[Locally indistinguishable set]
For a mixed state $\rho$ defined on a region $\Sigma$, its locally indistinguishable set $\mathcal{Q}(\rho;\xi_0)$ is a set of density matrices that are identical to $\rho$ in any simply connected subregion $\forall A\subset\Sigma$ and have a Markov length no greater than $\xi_0$:
\begin{align}
\mathcal{Q}(\rho;\xi_0)=\left\{\sigma\big|\xi(\sigma)\leq\xi_0;\sigma_A=\rho_A\right\}.
\end{align}
\end{dfn}

\begin{dfn}[Markov length]
    The Markov length of $\rho$ is defined by the exponentially decaying conditional mutual information (CMI) as
\begin{align}
I_\rho(A:C|B)&=S(AB)+S(BC)-S(B)-S(ABC)\nonumber\\
&\sim e^{-w(B)/\xi_{\mathrm{M}}},
\end{align}
where $A\cup B \cup C$ is a tripartition of the system (see Fig. \ref{Fig: tripartition}), $S(B)$ is the von Neumann entropy of $B$ with the width $w(B)$, and $\xi_{\mathrm{M}}$ is the Markov length.
\end{dfn}

\begin{thm}
The maximally decohered density matrices $\mathcal{Q}\{\rho_{\mathrm{cl}}\}$ from the ground states of Kitaev's quantum double $\mathcal{D}(G)$ form a locally indistinguishable set: for an arbitrary simply connected subregion $A\subset\Sigma$, and two mixed states $\cl^1,\cl^2\in\mathcal{Q}\{\cl\}$, we have
\begin{align}
\cl^A=\tr_{\bar{A}}\left(\cl^1\right)=\tr_{\bar{A}}\left(\cl^2\right),
\end{align}
and for an arbitrary mixed state $\sigma\in\mathcal{Q}\{\cl\}$, the respective Markov length $\xi_{\mathrm{M}}$ is finite:
\begin{align}
I_{\sigma}(A:C|B)\sim e^{-w(B)/\xi_{\mathrm{M}}}.
\end{align}
\end{thm}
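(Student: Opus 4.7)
The plan is to first characterize $\cl^i = \mathcal{N}[|\psi_i\rangle\langle\psi_i|]$ as a classical probability distribution on flat $G$-connections, and then prove the two claims using (a) the tensor-product structure of $\mathcal{N}$ combined with pure-state TQO, and (b) the locality of the plaquette and topological-sector constraints, respectively. The first step is to verify that $\mathcal{N}_e$ implements complete dephasing in the group basis. Writing $Z_{\Gamma,\alpha,\alpha'}=\sum_g \Gamma(g)_{\alpha,\alpha'}|g\rangle\langle g|$ and using $\mathrm{tr}\!\left(\Gamma(g)\Gamma(g')^\dagger\right)=\chi_\Gamma(g(g')^{-1})$ together with the regular-representation identity $\sum_\Gamma d_\Gamma \chi_\Gamma(h) = |G|\delta_{h,e}$ yields
\[
\mathcal{N}_e[|g\rangle\langle g'|] = \delta_{g,g'}\,|g\rangle\langle g|.
\]
Hence $\cl^i$ is diagonal in the group basis and encodes a classical distribution $p_i(\{g_e\})$ supported on flat connections ($B_p=1$ for all $p$) in the topological sector labeled by $|\psi_i\rangle$.

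For local indistinguishability, the key observation is that because $\mathcal{N}=\bigotimes_e\mathcal{N}_e$ factorizes and each $\mathcal{N}_e$ is trace-preserving, one has $\mathrm{tr}_{\bar A}(\cl^i) = \mathcal{N}_A\bigl[\mathrm{tr}_{\bar A}(|\psi_i\rangle\langle\psi_i|)\bigr]$, where $\mathcal{N}_A$ is the restriction of $\mathcal{N}$ to edges in $A$. The standard TQO property of the pure quantum double gives $\mathrm{tr}_{\bar A}(|\psi_i\rangle\langle\psi_i|) = \mathrm{tr}_{\bar A}(|\psi_j\rangle\langle\psi_j|)$ for any simply connected $A$: different ground states are related by non-contractible ribbon operators whose effect cannot be detected by any operator supported on a disk. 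Applying the common channel $\mathcal{N}_A$ to a common input therefore gives $\mathrm{tr}_{\bar A}(\cl^1) = \mathrm{tr}_{\bar A}(\cl^2)$.

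For the Markov length, I exploit the classical structure of $\cl^i$ to reduce the quantum CMI to the classical Shannon CMI of $p_i$. For the tripartition in Fig.~\ref{Fig: tripartition} with simply connected $A$ and annular buffer $B$ of width exceeding one plaquette diameter, the plaquette constraints touching $A$ (resp.~$C$) are contained in $A\cup B$ (resp.~$B\cup C$), and since $A\cup B$ is simply connected, no non-contractible cycle passes from $A$ through $B$ to $C$, so the topological-sector constraint is carried entirely by holonomies in $B\cup C$. Conditioning on the $B$-configuration thus decouples the $A$- and $C$-side flat-connection counting and gives $p_i(A,C\mid B) = p_i(A\mid B)\,p_i(C\mid B)$, hence $I_{\cl^i}(A{:}C\mid B)=0$ exactly, strictly stronger than the claimed exponential decay. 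The most delicate point, and the expected main obstacle, is verifying this factorization for non-Abelian $G$: one must confirm that the number of flat $G$-connections on $A$ with boundary holonomies imposed by $B$ is independent of the $C$-configuration, which should follow from the torsor structure of flat connections on simply connected regions under local gauge transformations, together with the local factorizability of the $A_v$-averaging weights inherited from the ground state.
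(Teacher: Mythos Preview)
Your proposal is correct, and the local-indistinguishability argument is essentially identical to the paper's: both factorize $\mathcal{N}=\mathcal{N}_A\otimes\mathcal{N}_{\bar A}$, invoke the pure-state TQO property that all ground states of $\mathcal{D}(G)$ share the same reduced density matrix on any simply connected disk, and then apply the common channel $\mathcal{N}_A$ to the common reduced state.

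For the Markov length, your route differs from the paper's. You diagonalize $\cl$ in the group basis (your dephasing computation is correct and matches the paper's implicit use of the great orthogonality theorem), reduce the quantum CMI to a classical Shannon CMI for the uniform measure on flat connections in a fixed topological sector, and then argue conditional independence $p_i(A,C\mid B)=p_i(A\mid B)\,p_i(C\mid B)$ by localizing the plaquette and sector constraints. The paper instead writes $\cl\propto\prod_p B_p$ (times a non-contractible ribbon projector for the sector, deformable into $C$), invokes the commuting-projector CMI formula from the Hamma and Sang references---which counts generators whose support intersects both $A$ and $C$---and concludes $I(A{:}C\mid B)=0$ once $w(B)$ exceeds one plaquette diameter. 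Your approach is more self-contained and avoids relying on an entropy formula whose extension from Pauli stabilizers to non-Abelian commuting projectors one might otherwise want to justify; the paper's approach is shorter and bypasses your flagged ``delicate point'' about counting flat extensions for non-Abelian $G$. That counting does go through---flat connections on a simply connected region with prescribed boundary data form a torsor under interior gauge transformations, so the number of extensions into $A$ depends only on the $B$-boundary data and not on $C$---so your argument closes, but the paper's projector-locality argument reaches the same conclusion with less bookkeeping.
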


We can further show that the number of extremal points in $\mathcal{Q}\{\cl\}$ is equal to the ground state degeneracy of the corresponding quantum double model Eq.~\eqref{Eq: Hamiltonian} before decoherence. 
\begin{thm}
The set of the extremal points of $\mathcal{Q}\{\rho_{\mathrm{cl}}\}$, denoted as $\mathrm{Ext}(\mathcal{Q}\{\rho_{\mathrm{cl}}\})$, is isomorphic to the $G$-orbits of $\mathrm{Hom}(\pi_1(\Sigma),G)$ under conjugation, namely
\begin{align}
\mathrm{Ext}(\mathcal{Q}\{\rho_{\mathrm{cl}}\}) \cong \mathrm{Hom}(\pi_1(\Sigma),G)/\mathrm{Ad}_G,
\end{align}
where $\pi_1(\Sigma)$ is the fundamental group of $\Sigma$, and
\begin{align}
    \mathrm{Ad} : G \to \mathrm{Aut} (G),\quad \mathrm{Ad}_g(x):= g x \bar{g}.
\end{align}
\end{thm}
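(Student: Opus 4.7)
The plan is to exploit the fact that the decoherence channel $\N$ completely dephases in the group basis on every edge. Using the great orthogonality theorem one checks
\begin{align}
\N_e[\ket{g}\bra{g'}] = \delta_{g,g'}\ket{g}\bra{g},
\end{align}
so $\cl$ is a \emph{classical} probability distribution over edge configurations $\{g_e\}_{e\in E}$. Starting from a ground state of $\mathcal{D}(G)$, the support is the set of \emph{flat} configurations (those annihilated by every $B_p$ constraint), and the uniform symmetrization by the $A_v$ projectors enforces invariance under local gauge transformations. Gauge-equivalence classes of flat $G$-connections on $\Sigma$ are, by the standard bundle-theoretic argument, in bijection with $\mathrm{Hom}(\pi_1(\Sigma),G)/\mathrm{Ad}_G$.

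For each class $[\phi] \in \mathrm{Hom}(\pi_1(\Sigma),G)/\mathrm{Ad}_G$, I would define $\sigma_{[\phi]}$ as the uniform classical mixture over flat configurations whose Wilson-line holonomies around non-contractible cycles realize $[\phi]$. To show $\sigma_{[\phi]} \in \mathcal{Q}\{\cl\}$, I would verify two properties: (i) in any simply connected $A\subset\Sigma$, all global topological sectors restrict to the same collection of locally-flat configurations with identical weights, since holonomy data lives only on non-contractible loops which necessarily pass through $\bar A$, hence $\tr_{\bar A}(\sigma_{[\phi]}) = \tr_{\bar A}(\cl)$; (ii) each $\sigma_{[\phi]}$ has finite Markov length because it is the classical Gibbs state of a commuting gapped parent Hamiltonian; in fact $I_{\sigma_{[\phi]}}(A:C|B)=0$ once $B$ buffers $A$ from $C$.

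The map $\Phi:[\phi]\mapsto\sigma_{[\phi]}$ is injective because distinct classes are distinguished by the expectation value of a closed magnetic ribbon operator $F_\eta^C$ along a non-contractible cycle, which depends only on the holonomy class $[\phi]$. For surjectivity onto $\mathrm{Ext}(\mathcal{Q}\{\cl\})$, I would argue that any $\sigma \in \mathcal{Q}\{\cl\}$ is supported on globally flat, gauge-invariant configurations: the flatness and vertex constraints are local, so local indistinguishability from $\cl$ forces them on $\sigma$. Partitioning this support by holonomy class then writes $\sigma = \sum_{[\phi]} p_{[\phi]}\,\sigma_{[\phi]}$, and the extremal points of this simplex are precisely the $\sigma_{[\phi]}$.

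The hard step will be ensuring that local indistinguishability together with the Markov condition pins down the intra-sector weights to be uniform, so that all remaining freedom in $\mathcal{Q}\{\cl\}$ lives in the simplex of sector weights $\{p_{[\phi]}\}$. This reduces to showing that the conditional distribution on a simply connected region, given any fixed boundary configuration, is uniquely inherited from $\cl$ and independent of the global sector; once this is in hand, gluing local conditionals around a non-contractible cycle leaves only the global holonomy $[\phi]$ as a free label, yielding $|\mathrm{Ext}(\mathcal{Q}\{\cl\})| = |\mathrm{Hom}(\pi_1(\Sigma),G)/\mathrm{Ad}_G|$, in agreement with the ground-state degeneracy of the undecohered $\mathcal{D}(G)$ model.
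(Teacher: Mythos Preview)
Your strategy is sound and shares the paper's key observation---that $\mathcal{N}_e$ fully dephases in the group basis via the great orthogonality theorem---but the two proofs run in opposite directions. The paper works \emph{forward}: it writes an arbitrary ground state $\ket{\Psi_{\mathrm{QD}}}=\sum_i c_i\ket{g_i,h_i}$ as a superposition over holonomy sectors, applies $\mathcal{N}$ explicitly, and shows (by inserting a single $Z_\Gamma$ on an edge of each non-contractible ribbon and invoking $\sum_\Gamma d_\Gamma\chi_\Gamma(g_i\bar g_{i'})=|G|\,\delta_{g_i,g_{i'}}$) that every cross term $\ket{g_i,h_i}\bra{g_{i'},h_{i'}}$ with $i\neq i'$ is annihilated, leaving $\sum_i |c_i|^2\rho_i$ with $\rho_i\propto\sum_{\mathbf l}\ket{\mathbf l^{g_i,h_i}}\bra{\mathbf l^{g_i,h_i}}$. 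You instead work \emph{backward}, characterizing $\mathcal{Q}\{\cl\}$ intrinsically through local indistinguishability and the Markov condition. Your route is more faithful to the actual definition of $\mathcal{Q}\{\cl\}$ (the paper's argument, strictly read, only treats states in the image of $\mathcal{N}$ on the codespace and never verifies that an arbitrary $\sigma\in\mathcal{Q}\{\cl\}$ must arise this way), but at the cost of the ``hard step'' you flag; the paper sidesteps that step entirely by never confronting a $\sigma$ not already of the form $\mathcal{N}[\rho_{\mathrm{QD}}]$.

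One correction to your injectivity argument: the operators that distinguish the $\sigma_{[\phi]}$ are the \emph{electric} closed ribbons $F_\xi^\Gamma$ (the strong symmetries), since these are diagonal in the group basis and read off the holonomy along $\xi$. The magnetic ribbons $F_\eta^C$ are off-diagonal in the group basis and \emph{shift} the holonomy sector rather than detect it, so $\tr(F_\eta^C\,\sigma_{[\phi]})$ vanishes on every diagonal $\sigma_{[\phi]}$ and cannot serve as the order parameter you want.
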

This theorem shows that the quantum information encoded in the ground-state subspace of the quantum double model decoheres into classical information under the $Z$-type errors defined in Eq.~\eqref{Eq: decohered QD}, and this classical information is stored in the convex set $\mathcal{Q}\{\cl\}$. Each extremal point of $\mathcal{Q}\{\cl\}$ corresponds to a density matrix of the original state with a distinct non-contractible loop configuration. 

The proofs of both theorems are provided in the Supplementary Materials \cite{supp}.

For completeness, we further show that the locally indistinguishable set $\mathcal{Q}\{\cl\}$ forms an information convex set, as previously discussed in Ref.~\cite{sang2025mixed}.

\begin{corollary}
The locally indistinguishable set $\mathcal{Q}\{\cl\}$ defines an information convex set, i.e., if $\sigma_1,\sigma_2\in\mathcal{Q}\{\cl\}$, then $\forall p\in[0,1]$, $\sigma_3=p\sigma_1+(1-p)\sigma_2\in\mathcal{Q}\{\cl\}$. 
\end{corollary}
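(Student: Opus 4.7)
The plan is to verify that $\sigma_3 = p\sigma_1+(1-p)\sigma_2$ belongs to $\mathcal{Q}\{\cl\}$, which amounts to checking the two defining conditions: (a) $\sigma_{3,A}=\cl^A$ for every simply connected $A\subset\Sigma$, and (b) the Markov length satisfies $\xi(\sigma_3)\leq\xi_0$. Condition (a) is immediate from linearity of the partial trace: since $\sigma_{i,A}=\cl^A$ for $i=1,2$, we have $\tr_{\bar A}(\sigma_3)=p\,\cl^A+(1-p)\,\cl^A=\cl^A$. The substantive content lies in verifying (b).

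For (b), my strategy is to exploit the structure established in Theorem~2 and write each $\sigma_i$ as a classical mixture over the extremal points, $\sigma_i=\sum_{[g]} q^{(i)}_{[g]}\,\cl^{[g]}$, where $[g]$ ranges over $\mathrm{Hom}(\pi_1(\Sigma),G)/\mathrm{Ad}_G$. This yields
\begin{equation}
\sigma_3=\sum_{[g]} r_{[g]}\,\cl^{[g]},\qquad r_{[g]}=p\,q^{(1)}_{[g]}+(1-p)\,q^{(2)}_{[g]}.
\end{equation}
For a tripartition $(A,B,C)$ of the type shown in Fig.~\ref{Fig: tripartition}, with $AB$ sitting in a contractible patch of $\Sigma$, the non-contractible ribbon operators distinguishing different $[g]$'s can be deformed to lie entirely inside $C$. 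Consequently, the reduced states $\cl^{[g]}_{AB}$ and $\cl^{[g]}_B$ are independent of $[g]$, while $\cl^{[g]}_{BC}$ and $\cl^{[g]}_{ABC}$ sit in distinct eigenspaces of the classical holonomy operators and are therefore pairwise orthogonal across distinct $[g]$. Applying the classical mixing identity
\begin{equation}
S\!\left(\sum_{[g]} r_{[g]}\,\cl^{[g]}_R\right)=H(r)+\sum_{[g]} r_{[g]}\,S\!\left(\cl^{[g]}_R\right),\quad R\in\{BC,ABC\},
\end{equation}
the Shannon term $H(r)$ cancels between the $S(BC)$ and $S(ABC)$ contributions to the CMI, leaving
\begin{equation}
I_{\sigma_3}(A:C|B)=\sum_{[g]} r_{[g]}\,I_{\cl^{[g]}}(A:C|B).
\end{equation}
By Theorem~1 each $I_{\cl^{[g]}}(A:C|B)$ decays exponentially in $w(B)$, and the same bound controls the convex combination, giving $\xi(\sigma_3)\leq\xi_0$.

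The main obstacle will be justifying the two structural inputs used in the CMI reduction: $[g]$-independence of $\cl^{[g]}_{AB}$ and $\cl^{[g]}_B$, and mutual orthogonality of $\cl^{[g]}_{BC}$ and $\cl^{[g]}_{ABC}$ across distinct $[g]$. Both rely on the observation that, after the $Z$-type decoherence of Eq.~\eqref{Eq: decohered QD}, the labels $[g]$ are realized as classical holonomy data supported on non-contractible ribbons that commute with every local operator. The delicate step is to show that these ribbons can always be isotoped away from the contractible region $AB$, so that the distinguishing information is fully accessible from $C$ alone; once this topological placement is secured, the entropic cancellation above follows directly from the block-diagonal classical structure of the extremal points inherited from the decoherence channel.
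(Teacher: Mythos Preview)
Your treatment of condition~(a) matches the paper. For condition~(b), however, your argument diverges from the paper's and carries a circularity.

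You invoke Theorem~2 to write each $\sigma_i\in\mathcal{Q}\{\cl\}$ as a convex combination $\sigma_i=\sum_{[g]}q^{(i)}_{[g]}\cl^{[g]}$ over extremal points. But Theorem~2 only identifies what the extremal points \emph{are}; it does not assert that an arbitrary element of $\mathcal{Q}\{\cl\}$ admits such a decomposition. That stronger statement would follow from a Krein--Milman-type argument once $\mathcal{Q}\{\cl\}$ is known to be convex---which is exactly what the corollary sets out to prove. The appendix proof of Theorem~2 does establish the decomposition for states of the form $\N[\rho_{\mathrm{QD}}]$ with $\rho_{\mathrm{QD}}$ a ground-state density matrix, but Definition~1 characterizes $\mathcal{Q}(\rho;\xi_0)$ abstractly as the set of \emph{all} density matrices with the prescribed local marginals and bounded Markov length; a generic $\sigma_i$ satisfying those two conditions is not a priori obtainable by decohering a ground state, so your first step is unjustified.

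The paper avoids this entirely with a short, model-independent entropy argument. Writing $I_{\sigma_3}(A:C|B)=S_{\sigma_3}(A|B)-S_{\sigma_3}(A|BC)$, the first term is common to $\sigma_1,\sigma_2,\sigma_3$ because $A\cup B$ is simply connected and hence $\sigma_{3,AB}=\sigma_{1,AB}=\sigma_{2,AB}$ by the local-indistinguishability hypothesis; the second term is controlled by concavity of quantum conditional entropy, $S_{\sigma_3}(A|BC)\geq p\,S_{\sigma_1}(A|BC)+(1-p)\,S_{\sigma_2}(A|BC)$. Combining gives $I_{\sigma_3}\leq p\,I_{\sigma_1}+(1-p)\,I_{\sigma_2}$, so $\xi(\sigma_3)\leq\max\{\xi(\sigma_1),\xi(\sigma_2)\}\leq\xi_0$. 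No extremal-point structure, no holonomy orthogonality, and no appeal to Theorem~2 are needed. If your decomposition could be justified independently, your route would actually yield the sharper equality $I_{\sigma_3}=\sum_{[g]}r_{[g]}I_{\cl^{[g]}}$; but as written it presupposes the conclusion.
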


\begin{proof}
To prove this, we should demonstrate: 
\begin{enumerate}[1.]
\item $\sigma_3$ is locally indistinguishable from $\cl\big|_A=\tr_{\bar{A}}(\cl)$;
\item $\sigma_3$ has a finite Markov length.
\end{enumerate}
The former one can be proved from:
\begin{align}
\sigma_{3,A}=p\sigma_{1,A}+(1-p)\sigma_{2,A}=\cl\big|_A.
\end{align}
For the latter, we note that CMI can be expressed as follows,
\begin{align}
I_{\sigma_3}(A:C|B)=S_{\sigma_3}(A|B)-S_{\sigma_3}(A|BC),
\end{align}
where $S(X|Y)=S(XY)-S(Y)$ is the quantum conditional entropy. Since $A\cup B$ is a simply connected regime, $S_{\sigma_3}(A|B)=S_{\sigma_{1}}(A|B) = S_{\sigma_{2}}(A|B)$. 

For the quantum conditional entropy $S_{\sigma_3}(A|BC)$, we apply the proterty of concavity, namely
\begin{align}
S_{\sigma_3}(A|BC)\geq pS_{\sigma_1}(A|BC)+(1-p)S_{\sigma_2}(A|BC),
\end{align}
which gives an upper bound of the CMI, namely
\begin{align}
I_{\sigma_3}(A:C & |B)\nonumber\\
\leq& ~pI_{\sigma_1}(A:C|B)+(1-p)I_{\sigma_2}(A:C|B).
\end{align}
Therefore, we conclude that the Markov length of $\sigma_3$ is upper bounded by the finite value $\max\{\xi_{\sigma_1},\xi_{\sigma_2}\}$. 
\end{proof}

\textit{Conclusion and outlook} -- In this work, we systematically study Kitaev’s quantum double model for an arbitrary finite group $G$ under certain types of local decoherence, together with the resulting spontaneous symmetry-breaking patterns. We show that under $Z$-type decoherence, the ground state codespace $\mathcal{C}[\mathcal{D}(G)]$ is mapped to a locally indistinguishable set $\mathcal{Q}\{\cl\}$. We further define strong and weak 1-form non-invertible symmetries of the decohered density matrices $\cl$ via closed ribbon operators, and demonstrate that $\cl$ exhibits SWSSB of the strong 1-form symmetry and WTSSB of the weak 1-form symmetry. Finally, we prove that the locally indistinguishable set $\mathcal{Q}\{\cl\}$ indeed forms an information convex set. Different degenerate ground states in $\mathcal{C}[\mathcal{D}(G)]$ correspond to distinct extremal points of the information convex set $\mathcal{Q}\{\cl\}$, which highlights that the quantum information originally stored in $\mathcal{C}[\mathcal{D}(G)]$ is decohered into classical information stored in $\mathcal{Q}\{\cl\}$.

From the categorical perspective, the Kraus operators we choose correspond to short ribbon operators of non-Abelian anyons. Specifically, they are bosons forming a subcategory of the original anyon theory. Applying our method to $\mathcal{D}(G)$, one can show that incoherently proliferating the bosons in a given subcategory of the anyon theory leaves the 1-form symmetry associated with the proliferated anyons strong, while the 1-form symmetry associated with anyons that braid non-trivially with them becomes weak.

We conclude this work with several open questions. We have shown that $Z$-type errors decohere the quantum information stored in the ground-state subspace $\mathcal{C}[\mathcal{D}(G)]$ into classical information stored in the locally indistinguishable convex set $\mathcal{Q}\{\cl\}$. A key open problem is to determine the error threshold of this decodability transition in Kitaev’s quantum double model under $Z$-type noise~\cite{workinprogress}. We conjecture that for an error channel $\mathcal{E}$ with parameter $p$ below a critical value $p_c$, the error is correctable in the sense that there exists a quasi-local, low-depth recovery channel $\tilde{\mathcal{E}}$ such that $\tilde{\mathcal{E}}\circ\mathcal{E}[\rho]\simeq\rho$. By contrast, for $p>p_c$, no such recovery channel exists. Moreover, we predict that this decodability transition coincides with the mixed-state phase transition between strong-to-trivial SSB and strong-to-weak SSB of the 1-form non-invertible symmetry defined by closed ribbon operators. We leave the proof of this prediction to future work.

Another open problem is to extend the paradigm of SWSSB to string-net models~\cite{PhysRevB.71.045110,PhysRevB.103.195155}. For instance, one may consider decohering the doubled Ising anyon theory by incoherently proliferating the $\psi \bar{\psi}$ anyon, as discussed in Ref.~\cite{Ellison_2025}. Following our analysis, the resulting decohered density matrix exhibits strong symmetries associated with the $\psi$, $\bar{\psi}$, and $\sigma \bar{\sigma}$ anyons, while the remaining 1-form symmetries becomes weak. Nevertheless, an extensive understanding of SWSSB in string-net models is still lacking, and we leave a detailed study of this problem to future work.

\begin{acknowledgements}
\textit{Acknowledgements} -- We thank Zhen Bi, Isaac Kim, Andy Lucas, and Ruochen Ma for insightful discussions. ZS thanks Shengqi Sang, Timothy Hsieh, and Isaac Kim for their collaboration on a related project. ZS is supported by the National Science Foundation under Grant No. DRL I-TEST 2148467. JHZ is supported by the US Department of Energy under Grant No. DE-SC0024324. This research was supported in part by a grant NSF PHY-2309135 to the Kavli Institute for Theoretical Physics (KITP). 
\end{acknowledgements}

\bibliography{Refs}

\appendix

\onecolumngrid

\section{Properties of generalized Pauli operators}
The commutation relations of $L_h^{+}$, $L_h^{-}$, and $Z_\Gamma$ are summarized as following
\begin{align}
\begin{gathered}
[L_h^{+},L_g^{-}]=0,\
L_g^{+}Z_\Gamma = \Gamma(\bar{g})Z_\Gamma L_g^{+},\
L_g^{-} Z_\Gamma = Z_\Gamma\Gamma(g) L_g^{-}
\end{gathered}. \label{eq:conmmutation}
\end{align}
The validity of $B_p$ terms in terms of the generalized Pauli $Z$ operators is ensured by the great orthogonality theorem of group representation theory: 
\begin{thm}[Great orthogonality theorem]
For a group $G$ with all irreps $\mathrm{Rep(G)}$, we have
\begin{align}
\frac{1}{|G|}\sum_{\Gamma\in\mathrm{Rep}(G)}\chi_\Gamma^*(g_i)\chi_\Gamma(g_j)=\frac{\delta_{ij}}{|g_i|},
\end{align}
where $\chi_\Gamma$ is the character of the irrep $\Gamma$, $|g_i|$ is the total number of elements in the conjugacy class of $g_i\in G$. In particular, if we take $g_i=e$ as the identity element of $G$, we get
\begin{align}
\frac{1}{|G|}\sum_{\Gamma\in\mathrm{Rep}(G)}d_\Gamma\cdot\chi_\Gamma(g_j)=\delta_{e,g_j},
\end{align}
where $d_\Gamma$ is the dimension of the irrep $\Gamma$.
\label{Thm: GOT}
\end{thm}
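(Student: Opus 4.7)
The plan is to establish both identities through the standard route from Schur's lemma to character orthogonality, extracting the special (second) identity directly from the regular representation and the general (first) identity by promoting row orthogonality of characters to column orthogonality of the character table. I would first invoke Schur's lemma: for any two irreducible unitary representations $\Gamma,\Gamma'$ and any matrix $M$ of appropriate size, the averaged operator $P=\frac{1}{|G|}\sum_{g}\Gamma(g)\,M\,\Gamma'(g^{-1})$ intertwines $\Gamma'$ with $\Gamma$, hence vanishes when $\Gamma\not\cong\Gamma'$ and equals $(\tr M/d_\Gamma)\,\mathbb{I}$ when $\Gamma=\Gamma'$. Taking $M$ to be a single-entry matrix extracts the matrix-coefficient orthogonality $\frac{1}{|G|}\sum_g\Gamma(g^{-1})_{ij}\Gamma'(g)_{kl}=\frac{1}{d_\Gamma}\delta_{\Gamma\Gamma'}\delta_{il}\delta_{jk}$; summing over the diagonals $i=j$ and $k=l$ and using $\chi_\Gamma^*(g)=\chi_\Gamma(g^{-1})$ then yields the row orthogonality of characters, $\frac{1}{|G|}\sum_g\chi_\Gamma^*(g)\chi_{\Gamma'}(g)=\delta_{\Gamma\Gamma'}$.

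Next, I would promote row orthogonality to column orthogonality. Since each $\chi_\Gamma$ is a class function, $\sum_g\chi_\Gamma^*(g)\chi_{\Gamma'}(g)$ reorganizes as $\sum_i|C_i|\,\chi_\Gamma^*(g_i)\chi_{\Gamma'}(g_i)$, with $i$ ranging over conjugacy classes and $g_i$ a representative of $C_i$. Defining the matrix $U_{\Gamma,i}:=\sqrt{|C_i|/|G|}\,\chi_\Gamma(g_i)$ with rows labelled by irreps and columns by conjugacy classes, row orthogonality reads $UU^\dagger=\mathbb{I}$. Since the number of inequivalent irreducible representations equals the number of conjugacy classes---a standard fact proved by showing that the irreducible characters span the space of class functions on $G$---the matrix $U$ is square, so $U^\dagger U=\mathbb{I}$ as well, which is exactly the stated identity $\frac{1}{|G|}\sum_\Gamma\chi_\Gamma^*(g_i)\chi_\Gamma(g_j)=\delta_{ij}/|C_i|$.

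For the special case $g_i=e$, I would invoke the regular representation $R$ of $G$, which decomposes as $R\cong\bigoplus_\Gamma d_\Gamma\cdot\Gamma$, giving $\chi_R(g)=\sum_\Gamma d_\Gamma\chi_\Gamma(g)$; on the other hand $R$ permutes a basis of $\mathbb{C}[G]$ without fixed points unless $g=e$, so $\chi_R(g)=|G|\,\delta_{g,e}$, and equating the two forms immediately produces $\frac{1}{|G|}\sum_\Gamma d_\Gamma\chi_\Gamma(g_j)=\delta_{e,g_j}$. I expect the main obstacle to be the squareness of the character table---that is, the bijection between irreducible representations and conjugacy classes---since this genuinely requires a completeness argument, typically via the Artin--Wedderburn decomposition $\mathbb{C}[G]\cong\bigoplus_\Gamma\mathrm{End}(V_\Gamma)$ together with the observation that the center of $\mathbb{C}[G]$ has dimension equal both to the number of conjugacy classes and to the number of irreps; everything else reduces to bookkeeping from Schur's lemma.
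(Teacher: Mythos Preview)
Your proof proposal is correct and follows the standard textbook route to character orthogonality: Schur's lemma gives matrix-coefficient orthogonality, tracing yields row orthogonality, squareness of the character table promotes this to column orthogonality, and the regular representation handles the special case directly.

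However, there is nothing to compare against: the paper does \emph{not} prove this theorem. It is stated in the appendix as a known result from group representation theory, invoked solely to justify the expression of $B_p$ in terms of generalized Pauli $Z$ operators, and is immediately followed by the next section without any argument. So your proposal supplies a full proof where the paper simply cites the result; your approach is the standard one and is entirely appropriate.
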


\section{Deformation of ribbon operators}

We summarize the algebra between the ribbon operator and the Hamiltonian terms here~\cite{beigi2010quantum-2ab,Bombin2007}. The multiplication of ribbon operators on the same ribbon $\xi$ is
\begin{align}
    F^{h_1, g_1}_{\xi} F^{h_2, g_2}_{\xi} = \delta_{g_1, g_2} F^{h_1 h_2, g_2}_{\xi}. \label{eq:ribbon_multiplication}
\end{align}
If the end of a ribbon $\xi_1$ is the start of another ribbon $\xi_2$, we can denote the composition of the two ribbons as $\xi = \xi_1 \xi_2$. The ribbon operator $F^{h,g}_{\xi}$ on ribbon $\xi$ obeys the co-multiplication rule
\begin{align}
    F^{h,g}_{\xi} = \sum_{k \in S_3} F^{h,k}_{\xi_1} F^{\bar{k} h k, \bar{k} g}_{\xi_2}. \label{eq:ribbon_comultiplication}
\end{align}

At the beginning of the ribbon operator $s_0$, we have the following algebra.
\begin{equation}
\begin{aligned}
    A^k_{s_0} F^{h,g}_{\xi} &= F^{k h \bar{k}, kg}_{\xi} A^k_{s_0}, \\
    B_{s_0}^k F^{h,g}_{\xi} &= F^{h,g}_{\xi} B_{s_0}^{kh}.
\end{aligned}
\end{equation}
At the end of the ribbon operator $s_1$, we have
\begin{equation}
    \begin{aligned}
        A^k_{s_1} F^{h,g}_{\xi} &= F^{h,g \bar{k}}_{\xi} A_{s_1}^k, \\
        B_{s_1}^{k} F^{h,g}_{\xi} &= F^{h,g}_{\xi} B_{s_1}^{\bar{g} \bar{h} g k}.
    \end{aligned}
\end{equation}

The ribbon operator $F^{h,g}_{\xi}$ can be deformed by applying local $A_v^k$ or $B_p$ operators. Consider the following ribbon operator,

\begin{equation}
\begin{tikzpicture}[scale=1.25]
\tikzstyle{sergio}=[rectangle,draw=none]
\filldraw[fill=gray!30, draw=gray, dotted]
(-4,-1) -- (-2,-1) -- (-2.5,-1.5) -- (-4.5,-1.5) -- cycle;
\filldraw[fill=gray!50, draw=gray, dotted]
(-2,-1) -- (-1,-1) -- (-1.5,-1.5) -- (-2.5,-1.5) -- cycle;
\filldraw[fill=gray!70, draw=gray, dotted]
(-1,-1) -- (0,-1) -- (-0.5,-1.5) -- (-1.5,-1.5) -- cycle;

\foreach \x in {-4,-3,...,-1} {
  \draw (\x,-1)--(\x+1,-1) node[currarrow,pos=0.5,sloped]{};
}

\foreach \x in {-4,-3,...,-1} {
  \draw (\x,-2)--(\x+1,-2);
}

\foreach \x in {-4,-3,...,0} {
  \draw (\x,-1)--(\x,-2) node[currarrow,pos=0.5,sloped]{};
}

\path (-5,-1.5) node [style=sergio]{$F^{h,g}_{\xi}$};

\filldraw[red] (-2,-1) circle (2pt);
\path (-2,-0.8) node [style=sergio]{$v$};

\path (-3.5,-0.8) node [style=sergio]{$x_1$};
\path (-2.5,-0.8) node [style=sergio]{$x_2$};
\path (-1.5,-0.8) node [style=sergio]{$x_3$};
\path (-0.5,-0.8) node [style=sergio]{$x_4$};

\path (-4,-1.75) node [style=sergio]{$y_1$};
\path (-3,-1.75) node [style=sergio]{$y_2$};
\path (-2,-1.75) node [style=sergio]{$y_3$};
\path (-1,-1.75) node [style=sergio]{$y_4$};

\path (-3.3,-1.3) node [style=sergio]{$\xi_1$};
\path (-1.7,-1.3) node [style=sergio]{$\xi_2$};
\path (-0.7,-1.3) node [style=sergio]{$\xi_3$};

\filldraw[fill=gray!30, draw=gray, dotted]
(-4,-3) -- (-2,-3) -- (-2.5,-3.5) -- (-4.5,-3.5) -- cycle;
\filldraw[fill=gray!50, draw=gray, dotted]
(-2,-3) -- (-1,-3) -- (-1.5,-3.5) -- (-2.5,-3.5) -- cycle;
\filldraw[fill=gray!70, draw=gray, dotted]
(-1,-3) -- (0,-3) -- (-0.5,-3.5) -- (-1.5,-3.5) -- cycle;

\foreach \x in {-4,-3,...,-1} {
  \draw (\x,-3)--(\x+1,-3) node[currarrow,pos=0.5,sloped]{};
}

\foreach \x in {-4,-3,...,-1} {
  \draw (\x,-4)--(\x+1,-4);
}

\foreach \x in {-4,-3,...,0} {
  \draw (\x,-3)--(\x,-4) node[currarrow,pos=0.5,sloped]{};
}

\path (-5,-3.5) node [style=sergio]{$= \delta_{\hat{x}_4,g}$};

\filldraw[red] (-2,-3) circle (2pt);
\path (-2,-2.8) node [style=sergio]{$v$};

\path (-3.5,-2.8) node [style=sergio]{$x_1$};
\path (-2.5,-2.8) node [style=sergio]{$x_2$};
\path (-1.5,-2.8) node [style=sergio]{$x_3$};
\path (-0.5,-2.8) node [style=sergio]{$x_4$};

\path (-4,-3.75) node [style=sergio]{$h y_1$};
\path (-3,-3.75) node [style=sergio]{$\hat{\bar{x}}_1 h \hat{x}_1 y_2$};
\path (-1.8,-3.75) node [style=sergio]{$\hat{\bar{x}}_2 h \hat{x}_2 y_3$};
\path (-0.6,-3.75) node [style=sergio]{$\hat{\bar{x}}_3 h \hat{x}_3 y_4$};

\path (-3.3,-3.3) node [style=sergio]{$\xi_1$};
\path (-1.7,-3.3) node [style=sergio]{$\xi_2$};
\path (-0.7,-3.3) node [style=sergio]{$\xi_3$};

\end{tikzpicture}
\end{equation}
where $\xi = \xi_1 \xi_2 \xi_3$, $\hat{x}_i = x_1 x_2 ... x_i$, and $\hat{\bar{x}}_i = (x_1 x_2 ... x_i)^{-1}$.

We can apply the co-multiplication rule to decompose it into the following form,
\begin{equation}
    \begin{aligned}
        F^{h,g}_{\xi} = \sum_{k,l} F^{h,k}_{\xi_1} F^{k^{-1} h k,l}_{\xi_2} F^{(kl)^{-1} h (kl),(kl)^{-1} g}_{\xi_3},
    \end{aligned}
\end{equation}
where $k = \hat{x}_2$ and $l = x_3$.

Consider the $A^{\bar{h}}_v = F^{\bar{h},e}_{\xi'}$ term in the Hamiltonian

\begin{equation}
\begin{tikzpicture}[scale=1.3]
\tikzstyle{sergio}=[rectangle,draw=none]

\filldraw[fill=gray!50, draw=gray, dotted]
(-0.5,0.5) -- (0,0) -- (-0.5,-0.5) -- cycle;
\filldraw[fill=gray!50, draw=gray, dotted]
(-0.5,0.5) -- (0.5,0.5) -- (0,0) -- cycle;
\filldraw[fill=gray!50, draw=gray, dotted]
(0.5,0.5) -- (0,0) -- (0.5,-0.5) -- cycle;
\filldraw[fill=gray!50, draw=gray, dotted]
(0.5,-0.5) -- (0,0) -- (-0.5,-0.5) -- cycle;

\draw (-1,0)--(0,0) node[currarrow,pos=0.5,sloped]{};
\draw (0,0)--(1,0) node[currarrow,pos=0.5,sloped]{};
\draw (0,0)--(0,-1) node[currarrow,pos=0.5,sloped]{};
\draw (0,1)--(0,0) node[currarrow,pos=0.5,sloped]{};

\path (-1.4,0) node [style=sergio]{$F^{\bar{h}, e}_{\xi'}$};
\path (-0.5,0.2) node [style=sergio]{$x_2$};
\path (0.5,0.2) node [style=sergio]{$x_3$};
\path (-0.2,0.6) node [style=sergio]{$y_5$};
\path (-0.2,-0.4) node [style=sergio]{$y_3$};
\path (1.5,0) node [style=sergio]{$=$};
\path (0.2,-0.2) node [style=sergio]{$\xi'$};

\filldraw[fill=gray!50, draw=gray, dotted]
(2.5,0.5) -- (3,0) -- (2.5,-0.5) -- cycle;
\filldraw[fill=gray!50, draw=gray, dotted]
(2.5,0.5) -- (3.5,0.5) -- (3,0) -- cycle;
\filldraw[fill=gray!50, draw=gray, dotted]
(3.5,0.5) -- (3,0) -- (3.5,-0.5) -- cycle;
\filldraw[fill=gray!50, draw=gray, dotted]
(3.5,-0.5) -- (3,0) -- (3-0.5,-0.5) -- cycle;

\draw (2,0)--(3,0) node[currarrow,pos=0.5,sloped]{};
\draw (3,0)--(4,0) node[currarrow,pos=0.5,sloped]{};
\draw (3,0)--(3,-1) node[currarrow,pos=0.5,sloped]{};
\draw (3,1)--(3,0) node[currarrow,pos=0.5,sloped]{};

\path (2.5,0.2) node [style=sergio]{$x_2 h$};
\path (3.5,0.2) node [style=sergio]{$\bar{h} x_3$};
\path (3-0.2,0.6) node [style=sergio]{$y_5 h$};
\path (3-0.2,-0.4) node [style=sergio]{$\bar{h} y_3$};

\path (3.2,-0.2) node [style=sergio]{$\xi'$};
\filldraw[red] (0,0) circle (2pt);
\path (0.1,0.15) node [style=sergio]{$v$};
\filldraw[red] (3,0) circle (2pt);
\path (3.1, 0.15) node [style=sergio]{$v$};

\end{tikzpicture}
\end{equation}
where $\xi'$ denotes the ribbon winding around the vertex $v$ in the counterclockwise direction.

It's free to insert this operator into the above ribbon operator at vertex $v$, since the operator serves as a gauge transformation. We have
\begin{equation}
    \begin{aligned}
        &\quad F^{\bar{k} \bar{h} k, e}_{\xi'} F^{h,g}_{\xi} = \sum_{k,l} F^{h,k}_{\xi_1} F^{\bar{k} \bar{h} k, e}_{\xi'} F^{\bar{k} h k, l}_{\xi_2} F^{\overline{(kl)} h (kl), \overline{(kl)} g}_{\xi_3}.
    \end{aligned}
\end{equation}

Near $v$, the operator acts as follows,
\begin{equation}
\begin{tikzpicture}[scale=1.2]
\tikzstyle{sergio}=[rectangle,draw=none]

\filldraw[fill=gray!50, draw=gray, dotted]
(-0.5,0.5) -- (0,0) -- (-0.5,-0.5) -- cycle;
\filldraw[fill=gray!50, draw=gray, dotted]
(-0.5,0.5) -- (0.5,0.5) -- (0,0) -- cycle;
\filldraw[fill=gray!50, draw=gray, dotted]
(0.5,0.5) -- (0,0) -- (0.5,-0.5) -- cycle;
\filldraw[fill=gray!50, draw=gray, dotted]
(0.5,-0.5) -- (0,0) -- (-0.5,-0.5) -- cycle;

\draw (-1,0)--(0,0) node[currarrow,pos=0.5,sloped]{};
\draw (0,0)--(1,0) node[currarrow,pos=0.5,sloped]{};
\draw (0,0)--(0,-1) node[currarrow,pos=0.5,sloped]{};
\draw (0,1)--(0,0) node[currarrow,pos=0.5,sloped]{};

\path (-1.5,0) node [style=sergio]{$F^{\bar{k} \bar{h} k, e}_{\xi'} F^{\bar{k} h k, l}_{\xi_2}$};
\path (-0.5,0.2) node [style=sergio]{$x_2$};
\path (0.5,0.2) node [style=sergio]{$x_3$};
\path (-0.2,0.6) node [style=sergio]{$y_5$};
\path (-0.2,-0.4) node [style=sergio]{$y_3$};
\path (1.6,0) node [style=sergio]{$= F^{\bar{k} \bar{h} k, e}_{\xi'}$};
\path (0.2,-0.2) node [style=sergio]{$\xi'$};

\filldraw[fill=gray!50, draw=gray, dotted]
(2.5,-2.5+3) -- (3,-3+3) -- (2.5,-3.5+3) -- cycle;
\filldraw[fill=gray!50, draw=gray, dotted]
(2.5,-2.5+3) -- (3.5,-2.5+3) -- (3,-3+3) -- cycle;
\filldraw[fill=gray!50, draw=gray, dotted]
(3.5,-2.5+3) -- (3,-3+3) -- (3.5,-3.5+3) -- cycle;
\filldraw[fill=gray!50, draw=gray, dotted]
(3.5,-3.5+3) -- (3,-3+3) -- (3-0.5,-3.5+3) -- cycle;

\draw (2,0)--(3,0) node[currarrow,pos=0.5,sloped]{};
\draw (3,0)--(4,0) node[currarrow,pos=0.5,sloped]{};
\draw (3,0)--(3,-1) node[currarrow,pos=0.5,sloped]{};
\draw (3,1)--(3,0) node[currarrow,pos=0.5,sloped]{};

\path (2.5,0.2) node [style=sergio]{\small $x_2$};
\path (3.5,0.2) node [style=sergio]{\small $x_3$};
\path (3-0.2,0.6) node [style=sergio]{\small $y_5$};
\path (3-0.4,-0.4) node [style=sergio]{\small $\bar{k} h k y_3$};

\path (3.2,-0.2) node [style=sergio]{$\xi'$};
\filldraw[red] (0,0) circle (2pt);
\path (0.1,0.15) node [style=sergio]{$v$};
\filldraw[red] (3,0) circle (2pt);
\path (3.1, 0.15) node [style=sergio]{$v$};

\filldraw[fill=gray!50, draw=gray, dotted]
(2.5,0.5-2.2) -- (3,0-2.2) -- (2.5,-0.5-2.2) -- cycle;
\filldraw[fill=gray!50, draw=gray, dotted]
(2.5,0.5-2.2) -- (3.5,0.5-2.2) -- (3,0-2.2) -- cycle;
\filldraw[fill=gray!50, draw=gray, dotted]
(3.5,0.5-2.2) -- (3,0-2.2) -- (3.5,-0.5-2.2) -- cycle;

\draw (2,0-2.2)--(3,0-2.2) node[currarrow,pos=0.5,sloped]{};
\draw (3,0-2.2)--(4,0-2.2) node[currarrow,pos=0.5,sloped]{};
\draw (3,0-2.2)--(3,-1-2.2) node[currarrow,pos=0.5,sloped]{};
\draw (3,1-2.2)--(3,0-2.2) node[currarrow,pos=0.5,sloped]{};

\path (1.7,-2.2) node [style=sergio]{$=$};
\path (2.5,0.2-2.2) node [style=sergio]{\small $x_2 \bar{k} h k$};
\path (3.5,0.2-2.2) node [style=sergio]{\small $\bar{k} \bar{h} k x_3$};
\path (3,0.6-2.1) node [style=sergio]{\small $y_5 \bar{k} h k$};
\path (3-0.2,-0.4-2.2) node [style=sergio]{\small $y_3$};

\filldraw[red] (3,-2.2) circle (2pt);

\end{tikzpicture}
\end{equation}

The new ribbon operator becomes

\begin{equation}
    \begin{aligned}
        F^{\bar{k} \bar{h} k, e}_{\xi'} F^{h,g}_{\xi} &= \sum_{k,l} F^{h,k}_{\xi_1} F^{\bar{k} h k, l}_{\xi_2'} F^{\overline{(kl)} h (kl), \overline{(kl)} g}_{\xi_3} \\
        &= F^{h,g}_{\xi''},
    \end{aligned}
\end{equation}
where $\xi''$ is the new ribbon below.
\begin{equation}
\begin{tikzpicture}[scale=1.4]
\tikzstyle{sergio}=[rectangle,draw=none]
\filldraw[fill=gray!50, draw=gray, dotted]
(-4,-1) -- (-2,-1) -- (-2.5,-1.5) -- (-4.5,-1.5) -- cycle;
\filldraw[fill=gray!50, draw=gray, dotted]
(-2,-1) -- (-1,-1) -- (-1.5,-1.5) -- cycle;
\filldraw[fill=gray!50, draw=gray, dotted]
(-1,-1) -- (0,-1) -- (-0.5,-1.5) -- (-1.5,-1.5) -- cycle;
\filldraw[fill=gray!50, draw=gray, dotted]
(-2.5,-1) -- (-2.5,-0.5) -- (-1.5,-0.5) -- (-1.5,-1) -- cycle;

\foreach \x in {-4,-3,...,-1} {
  \draw (\x,-1)--(\x+1,-1) node[currarrow,pos=0.5,sloped]{};
}

\draw (-2,0)--(-2,-1) node[currarrow,pos=0.5,sloped]{};

\foreach \x in {-4,-3,...,-1} {
  \draw (\x,-2)--(\x+1,-2);
}

\foreach \x in {-4,-3,...,0} {
  \draw (\x,-1)--(\x,-2) node[currarrow,pos=0.5,sloped]{};
}

\draw[draw=gray, dotted] (-2.5,-0.5)--(-2,-1);
\draw[draw=gray, dotted] (-1.5,-0.5)--(-2,-1);

\filldraw[red] (-2,-1) circle (2pt);
\path (-1.9,-0.8) node [style=sergio]{$v$};

\end{tikzpicture}
\end{equation}

By sequentially applying these local ribbon operators, one can deform a ribbon operator to arbitrary shapes. 

One could also consider the other gauge transformation,

\begin{equation}
\begin{tikzpicture}[scale=1.42]
\tikzstyle{sergio}=[rectangle,draw=none]

\filldraw[fill=gray!50, draw=gray, dotted]
(-0.5,0.5) -- (0,0) -- (-0.5,-0.5) -- cycle;
\filldraw[fill=gray!50, draw=gray, dotted]
(-0.5,0.5) -- (0.5,0.5) -- (0,0) -- cycle;
\filldraw[fill=gray!50, draw=gray, dotted]
(0.5,0.5) -- (0,0) -- (0.5,-0.5) -- cycle;
\filldraw[fill=gray!50, draw=gray, dotted]
(0.5,-0.5) -- (0,0) -- (-0.5,-0.5) -- cycle;

\draw (-0.5,0.5)--(0.5,0.5) node[currarrow,pos=0.5,sloped]{};
\draw (0.5,0.5)--(0.5,-0.5) node[currarrow,pos=0.5,sloped]{};
\draw (-0.5,0.5)--(-0.5,-0.5) node[currarrow,pos=0.5,sloped]{};
\draw (-0.5,-0.5)--(0.5,-0.5) node[currarrow,pos=0.5,sloped]{};

\path (-1.2,0) node [style=sergio]{$F^{h, e}_{\xi'}$};
\path (-0.7,0) node [style=sergio]{$y_5$};
\path (0.7,0) node [style=sergio]{$y_6$};
\path (0,0.7) node [style=sergio]{$x_5$};
\path (0,-0.7) node [style=sergio]{$x_3$};
\path (1.5,0) node [style=sergio]{$= \delta_{x_5 y_6 \bar{x}_3 \bar{y}_5, e}$};
\path (0.2,-0.2) node [style=sergio]{$\xi'$};

\filldraw[fill=gray!50, draw=gray, dotted]
(-0.5+3,0.5) -- (0+3,0) -- (-0.5+3,-0.5) -- cycle;
\filldraw[fill=gray!50, draw=gray, dotted]
(-0.5+3,0.5) -- (0.5+3,0.5) -- (0+3,0) -- cycle;
\filldraw[fill=gray!50, draw=gray, dotted]
(0.5+3,0.5) -- (0+3,0) -- (0.5+3,-0.5) -- cycle;
\filldraw[fill=gray!50, draw=gray, dotted]
(0.5+3,-0.5) -- (0+3,0) -- (-0.5+3,-0.5) -- cycle;

\draw (-0.5+3,0.5)--(0.5+3,0.5) node[currarrow,pos=0.5,sloped]{};
\draw (0.5+3,0.5)--(0.5+3,-0.5) node[currarrow,pos=0.5,sloped]{};
\draw (-0.5+3,0.5)--(-0.5+3,-0.5) node[currarrow,pos=0.5,sloped]{};
\draw (-0.5+3,-0.5)--(0.5+3,-0.5) node[currarrow,pos=0.5,sloped]{};

\path (-0.7+3,0) node [style=sergio]{$y_5$};
\path (0.7+3,0) node [style=sergio]{$y_6$};
\path (0+3,0.7) node [style=sergio]{$x_5$};
\path (0+3,-0.7) node [style=sergio]{$x_3$};

\path (0.2+3,-0.2) node [style=sergio]{$\xi'$};

\filldraw[red] (-0.5,-0.5) circle (2pt);
\path (0.1-0.5,0.15-0.5) node [style=sergio]{$v$};
\filldraw[red] (3-0.5,-0.5) circle (2pt);
\path (3.1-0.5, 0.15-0.5) node [style=sergio]{$v$};

\end{tikzpicture}
\end{equation}

Near the vertex $v$, we are free to insert this term since it's a gauge transformation. We have

\begin{equation}
    \begin{aligned}
        F^{\bar{k} h k,e}_{\xi'} F^{h,g}_{\xi} = \sum_{k,l} F^{h,k}_{\xi_1} F^{\bar{k} h k, e}_{\xi'} F^{\bar{k} h k, l}_{\xi_2} F^{\overline{(kh)} h (kl), \overline{(kl)} g}_{\xi_3}
    \end{aligned}
\end{equation}

Near the vertex $v$, the operator acts as follows.

\begin{equation}
\begin{tikzpicture}[scale=1.42]
\tikzstyle{sergio}=[rectangle,draw=none]

\filldraw[fill=gray!50, draw=gray, dotted]
(-0.5,0.5) -- (0,0) -- (-0.5,-0.5) -- cycle;
\filldraw[fill=gray!50, draw=gray, dotted]
(-0.5,0.5) -- (0.5,0.5) -- (0,0) -- cycle;
\filldraw[fill=gray!50, draw=gray, dotted]
(0.5,0.5) -- (0,0) -- (0.5,-0.5) -- cycle;
\filldraw[fill=gray!50, draw=gray, dotted]
(0.5,-0.5) -- (0,0) -- (-0.5,-0.5) -- cycle;

\draw (-0.5,0.5)--(0.5,0.5) node[currarrow,pos=0.5,sloped]{};
\draw (0.5,0.5)--(0.5,-0.5) node[currarrow,pos=0.5,sloped]{};
\draw (-0.5,0.5)--(-0.5,-0.5) node[currarrow,pos=0.5,sloped]{};
\draw (-0.5,-0.5)--(0.5,-0.5) node[currarrow,pos=0.5,sloped]{};

\path (-1.5,0) node [style=sergio]{$F^{\bar{k} h k,e}_{\xi'} F^{\bar{k} h k,l}_{\xi_2}$};
\path (-0.7,0) node [style=sergio]{$y_5$};
\path (0.7,0) node [style=sergio]{$y_6$};
\path (0,0.7) node [style=sergio]{$x_5$};
\path (0,-0.7) node [style=sergio]{$x_3$};
\path (1.5,0) node [style=sergio]{$= \delta_{\bar{y}_5 x_5 y_6, l}$};
\path (0.2,-0.2) node [style=sergio]{$\xi'$};

\filldraw[fill=gray!50, draw=gray, dotted]
(-0.5+3,0.5) -- (0+3,0) -- (-0.5+3,-0.5) -- cycle;
\filldraw[fill=gray!50, draw=gray, dotted]
(-0.5+3,0.5) -- (0.5+3,0.5) -- (0+3,0) -- cycle;
\filldraw[fill=gray!50, draw=gray, dotted]
(0.5+3,0.5) -- (0+3,0) -- (0.5+3,-0.5) -- cycle;
\filldraw[fill=gray!50, draw=gray, dotted]
(0.5+3,-0.5) -- (0+3,0) -- (-0.5+3,-0.5) -- cycle;

\draw (-0.5+3,0.5)--(0.5+3,0.5) node[currarrow,pos=0.5,sloped]{};
\draw (0.5+3,0.5)--(0.5+3,-0.5) node[currarrow,pos=0.5,sloped]{};
\draw (-0.5+3,0.5)--(-0.5+3,-0.5) node[currarrow,pos=0.5,sloped]{};
\draw (-0.5+3,-0.5)--(0.5+3,-0.5) node[currarrow,pos=0.5,sloped]{};

\path (-0.7+3,0) node [style=sergio]{$y_5$};
\path (0.7+3,0) node [style=sergio]{$y_6$};
\path (0+3,0.7) node [style=sergio]{$x_5$};
\path (0+3,-0.7) node [style=sergio]{$l$};

\path (0.2+3,-0.2) node [style=sergio]{$\xi'$};

\filldraw[red] (-0.5,-0.5) circle (2pt);
\path (0.1-0.5,0.15-0.5) node [style=sergio]{$v$};
\filldraw[red] (3-0.5,-0.5) circle (2pt);
\path (3.1-0.5, 0.15-0.5) node [style=sergio]{$v$};

\end{tikzpicture}
\end{equation}

The original ribbon operator becomes 

\begin{equation}
    \begin{aligned}
        F^{\bar{k} h k,e}_{\xi'} F^{h,g}_{\xi} &= \sum_{k,l} F^{h,k}_{\xi_1} F^{\bar{k} h k, l}_{\xi_2'} F^{\overline{(kl)} h (kl), \overline{(kl)} g}_{\xi_3} \\
        &= F^{h,g}_{\xi''},
    \end{aligned}
\end{equation}
where $\xi''$ is the new ribbon as illustrated below.

\begin{equation}
\begin{tikzpicture}[scale=1.4]
\tikzstyle{sergio}=[rectangle,draw=none]
\filldraw[fill=gray!50, draw=gray, dotted]
(-4,-1) -- (-2,-1) -- (-2.5,-1.5) -- (-4.5,-1.5) -- cycle;
\filldraw[fill=gray!50, draw=gray, dotted]
(-2,-1) -- (-1.5,-1) -- (-2,-1.5) -- (-2.5,-1.5) -- cycle;
\filldraw[fill=gray!50, draw=gray, dotted]
(-1.5,-1) -- (0,-1) -- (-0.5,-1.5) -- (-1,-1.5) -- cycle;
\filldraw[fill=gray!50, draw=gray, dotted]
(-2,-1) -- (-2,0) -- (-1,0) -- (-1,-1) -- cycle;

\foreach \x in {-4,-3,...,-1} {
  \draw (\x,-1)--(\x+1,-1) node[currarrow,pos=0.5,sloped]{};
}
\foreach \x in {-4,-3,...,-1} {
  \draw (\x,-2)--(\x+1,-2);
}

\foreach \x in {-4,-3,...,0} {
  \draw (\x,-1)--(\x,-2) node[currarrow,pos=0.5,sloped]{};
}

\draw (-2,0)--(-2,-1) node[currarrow,pos=0.5,sloped]{};
\draw (-1,0)--(-1,-1) node[currarrow,pos=0.5,sloped]{};
\draw (-2,0)--(-1,0) node[currarrow,pos=0.5,sloped]{};

\draw[draw=gray, dotted] (-2,0)--(-1,-1);
\draw[draw=gray, dotted] (-1,0)--(-2,-1);

\filldraw[red] (-2,-1) circle (2pt);
\path (-1.9,-0.8) node [style=sergio]{$v$};

\end{tikzpicture}
\end{equation}

\section{$S$-matrix of quantum double model}

In this section, we review the modular properties of the quantum double $D(G)$ and present explicit formulas for the $S$-matrix. We then show that for any nontrivial magnetic flux (electric charge), there exists an electric charge (magnetic flux) that braids nontrivially with it, producing a nontrivial phase factor.

Anyons in the non-Abelian quantum double $D(G)$ are given by the pair $(C(g),\Gamma)$, where the flux $C(g)=\{hgh^{-1}|h\in G\}$ is a conjugacy class, and the charge $\Gamma$ is an irreducible representation of the centralizer $Z_{g}=\{h|hg=gh\}$.

The $S$-matrix is given by the following formula~\cite{beigi2010quantum-2ab},
\begin{equation}
    \begin{aligned}
        S_{(C(g), \Gamma)(C(g'), \Gamma')} = \frac{\sum_{h:hg'\bar{h} \in Z_g} \chi_{\Gamma} (h \bar{g}' \bar{h}) \chi_{\Gamma'} (\bar{h} \bar{g} h)}{|Z_g| |Z_{g'}|}.
    \end{aligned}
\end{equation}

For a purely electric charge $([e], \Gamma)$ and a purely magnetic flux $(C(g), 1)$, the $S$-matrix is given as follows,
\begin{equation}
    \begin{aligned}
        S_{(C(e), \Gamma)(C(g), 1)} = \frac{\sum_{h \in G} \chi_{\Gamma}(h \bar{g} \bar{h})}{|G||Z_g|} = \frac{|C(g)|}{|G|}  \chi_{\Gamma}(\bar{g}),
    \end{aligned}
\end{equation}
where $|C(g)|$ is the order of the conjugacy class $C(g)$.

\section{Proof of Theorem 1}

\begin{proof}
Firstly, we prove that all density matrices in $\mathcal{Q}\{\rho_{\mathrm{cl}}\}$ with different eigenvalues of non-contractible strong 1-form symmetry defined in Eq.~(13) in the main text are identical on any simply connected subregion $A$. 

For all ground states $\ket{\psi_{\mathrm{QD}}}$ of $\mathcal{D}(G)$, the reduced density matrix on any simply connected subregion $A$ is the same,
\begin{align} \tr_{\bar{A}}(\ket{\psi_{\mathrm{QD}}}\bra{\psi_{\mathrm{QD}}})=\rho_A, \end{align}
which is a key consequence of the spontaneous symmetry breaking of 1-form symmetries (’t Hooft anomaly).

Then consider the decoherence channel defined in Eq.~(9) in the main text. Since the Kraus operators act independently on different sites, the channel can be decomposed into parts on $A$ and $\bar{A}$, namely $\N=\N_A\circ\N_{\bar{A}}$. Then consider a decohered density matrix $\cl=\N[\ket{\psi_{\mathrm{QD}}}\bra{\psi_{\mathrm{QD}}}]$, we have the following relation,
\begin{align}
\cl\big|_A=\tr_{\bar{A}}(\cl)=\N_A[\rho_A], 
\end{align}
which is also indistinguishable in the region $A$. 

Explicitly, we have
\begin{equation}
    \begin{aligned}
        \cl &= \mathcal{N}[\rho_{QD}]\\
        &=\sum_{\{Z_{\Gamma}\}_A,\{Z_{\Gamma'}\}_{\bar{A}},\Gamma, \Gamma'\in \mathrm{Rep}(G)} \frac{d_{\Gamma}^{|\{Z_{\Gamma}\}_A|} d_{\Gamma'}^{ |\{Z_{\Gamma'}\}_{\bar{A}}|}}{|G|^{|A| + |\bar{A}|}} \{Z_{\Gamma, \alpha, \alpha'}\}_{A} \{Z_{\Gamma', \beta, \beta'}\}_{\bar{A}} \rho_{QD} \{Z_{\Gamma, \alpha, \alpha'}^{\dagger}\}_{A} \{Z_{\Gamma', \beta, \beta'}^{\dagger}\}_{\bar{A}},
    \end{aligned}
\end{equation}
where $\{Z_{\Gamma}\}_A$ represents all the possible $Z_{\Gamma}$ configurations in the region $A$, $Z_{\Gamma, \alpha, \alpha'}$ represents the $(\alpha, \alpha')$ component of the internal states of $Z_{\Gamma}$, $|\{Z_{\Gamma}\}_A|$ denotes the number of edges in region $A$ that $Z_{\Gamma}$ is applied on, and $d_{\Gamma}^{|\{Z_{\Gamma}\}_A|} := \prod_{i} d_{\Gamma_{i}}$, where $Z_{\Gamma_i} \in \{Z_{\Gamma}\}_A$. For the simplicity of notations, we sum over the internal degree of freedoms $\alpha, \alpha', \beta, \beta'$ automatically. Then we have
{\footnotesize
    \begin{equation*}
        \begin{aligned}
            \tr_{\bar{A}} (\cl) &= \tr_{\bar{A}} \left(\sum_{\Gamma, \Gamma' \in \mathrm{Rep}(G)}\frac{1}{|G|^{|A| + |\bar{A}|}} \sum_{\{Z_{\Gamma}\}_A,\{Z_{\Gamma'}\}_{\bar{A}}} d_{\Gamma}^{|\{Z_{\Gamma}\}_A|} d_{\Gamma'}^{ |\{Z_{\Gamma'}\}_{\bar{A}}|} \{Z_{\Gamma, \alpha, \alpha'}\}_{A} \{Z_{\Gamma', \beta, \beta'}\}_{\bar{A}} \rho_{QD} \{Z_{\Gamma, \alpha, \alpha'}^{\dagger}\}_{A} \{Z_{\Gamma', \beta, \beta'}^{\dagger}\}_{\bar{A}}\right) \\
            &= \sum_{\{Z_{\Gamma}\}_A, \Gamma \in \mathrm{Rep}(G)} \frac{d_{\Gamma}^{|\{Z_{\Gamma}\}_{A}|}}{|G|^{|A|}} \{Z_{\Gamma, \alpha, \alpha'}\}_{A} \tr_{\bar{A}} \left(\sum_{\Gamma' \in \mathrm{Rep}(G)}\frac{1}{|G|^{|\bar{A}|}} \sum_{\{Z_{\Gamma'}\}_{\bar{A}}} d_{\Gamma'}^{|\{Z_{\Gamma'}\}_{\bar{A}}|}  \{Z_{\Gamma', \beta, \beta'}\}_{\bar{A}} \rho_{QD}  \{Z_{\Gamma', \beta, \beta'}^{\dagger}\}_{\bar{A}}\right) \{Z_{\Gamma, \alpha, \alpha'}^{\dagger}\}_{A} \\
            &= \sum_{\{Z_{\Gamma}\}_A, \Gamma \in \mathrm{Rep}(G)} \frac{d_{\Gamma}^{|\{Z_{\Gamma}\}_{A}|}}{|G|^{|A|}} \{Z_{\Gamma, \alpha, \alpha'}\}_{A} \tr_{\bar{A}} \left(\rho_{QD}\right) \{Z_{\Gamma, \alpha, \alpha'}^{\dagger}\}_{A} \\
            &= \mathcal{N}_{A} [\rho_A].
        \end{aligned}
    \end{equation*}}
From the first to the second line, we use the fact that Kraus operators supported on region $A$ are unaffected by the partial trace. From the second to the third line, we use the following facts,
\begin{equation}
    \begin{aligned}
        \tr \left(Z_{\Gamma, \alpha, \alpha'} \rho Z_{\Gamma, \alpha, \alpha'}^{\dagger} \right) = d_{\Gamma} \tr \left(\rho\right),
    \end{aligned}
\end{equation}
and
\begin{equation}
    \begin{aligned}
        \sum_{\Gamma \in \mathrm{Rep}(G)} d_{\Gamma}^2 = |G|.
    \end{aligned}
\end{equation}

Consider starting from a non-trivial ground state of the quantum double model. The non-contractible loops can always be deformed to lie entirely within the region $\bar{A}$~\cite{Kitaev:1997wr,Bombin2007}. After tracing out $\bar{A}$, the resulting reduced density matrix must be identical, namely
\begin{align}
\cl^A = \tr_{\bar{A}}(\cl^1) = \tr_{\bar{A}}(\cl^2),
\end{align}
where $\cl^1$ and $\cl^2$ are the density matrices obtained by decohering different ground states of the quantum double.

Then we show that all density matrices in $\mathcal{Q}\{\cl\}$ have finite Markov length. We follow the proof in Ref.~\cite{sang2025mixed}. Before the decoherence, the density matrix can be written in a stabilizer form
\begin{equation}
    \begin{aligned}
        \rho_{\mathrm{QD}} \propto \prod_{p} B_p \prod_{v} A_v.
    \end{aligned}
\end{equation}
After the decoherence, since the Kraus operators don't commute with $A_v$ terms, while commute with all the $B_p$ terms, the decohered density matrix is thus given by
\begin{equation}
    \begin{aligned}
        \cl \propto \prod_{p} B_p.
    \end{aligned}
\end{equation}
Each $B_p$ is a projector, and different $B_p$ projectors commute with each other. The generating set of the commuting projector state can be denoted as follows,
\begin{equation}
    \begin{aligned}
        \mathcal{G} (\cl) = \{B_p\}.
    \end{aligned}
\end{equation}

According to the discussion in Ref.~\cite{HAMMA200522,sang2025mixed}, the conditional mutual information $I(A:C|B)$ is given as follows,
\begin{equation}
    \begin{aligned}
        &\quad \ I_{\cl}(A:C|B) = \min_{\mathcal{G}(\rho)} |\{B_p \in \mathcal{G}(\rho), \text{support of $B_p$ intersects $A$ and $C$}\}|,
    \end{aligned}
\end{equation}
where the minimization is taken over all equivalent choices of $\mathcal{G}(\rho)$. 

Therefore, for a sufficiently thick buffer region $B$, we can always have the following result,
\begin{equation}
    \begin{aligned}
        I_{\cl}(A:C|B) = 0,
    \end{aligned}
\end{equation}
which corresponds to the Markov length $\xi(\cl) = 0$. In particular, we claim that all ribbon operators $F_\xi^\Gamma$ defined on non-contractible loops do not contribute to CMI. Because only topology matters for the ribbon operators, we can always deform the non-contractible loops such that all corresponding ribbon operators are supported solely at $C$.
\end{proof}

\section{More properties of Kitaev's quantum double model}
Before we discuss the dimension of the locally indistinguishable set $\mathcal{Q}\{\cl\}$, we first review the dimension of the ground-state subspace (codespace) of Kitaev's quantum double model in isolated systems \cite{Cui_2020}. 

\begin{thm}
\label{Thm: codespace}
Consider the (untwisted) Kitaev’s $\mathcal{D}(G)$ quantum double defined on a cellulation $\Sigma=(V,E,F)$ of a closed 2-manifold, the dimension of the codespace $\mathcal{C}[\mathcal{D}(G)]$ is equal to the number of $G$-orbits of $\mathrm{Hom}(\pi_1(\Sigma),G)$ under conjugation:
\begin{align}
\mathrm{dim}\ \mathcal{C}[\mathcal{D}(G)]=|\mathrm{Hom}(\pi_1(\Sigma),G)/\mathrm{Ad}_G|, \label{eq: dim_codespace}
\end{align} 
where 
\begin{align}
    \mathrm{Ad} : G \to \mathrm{Aut} (G),\quad \mathrm{Ad}_g(x):= g x \bar{g},
\end{align}
and $\pi_1(\Sigma)$ is the fundamental group of $\Sigma$.
\end{thm}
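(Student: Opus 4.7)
The plan is to identify the codespace with the space of $G^V$-invariant vectors inside the subspace of flat lattice $G$-connections, and then to count $G^V$-orbits of flat connections via the holonomy representation of $\pi_1(\Sigma)$. The conceptual starting point is to read a group-basis state $\ket{\{g_e\}}$ as a discrete $G$-connection on $\Sigma$, assigning parallel transport $g_e \in G$ to each oriented edge $e$. With this dictionary, $B_p$ enforces trivial holonomy around every plaquette, while $A_v = \frac{1}{|G|}\sum_h A_v^h$ is the averaging projector onto states invariant under local gauge transformations at $v$.

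First I would show that $\{B_p\}_{p\in F}$ is a commuting family of diagonal projectors in the group basis, whose joint $+1$ eigenspace $\mathcal{H}_{\mathrm{flat}}$ is spanned by basis vectors labeled by flat edge-colorings. Next, the operators $\{A_v^h\}$ preserve $\mathcal{H}_{\mathrm{flat}}$ because gauge transformations send flat connections to flat connections, and they define a unitary $G^V$-action by permutation of this basis. The full vertex projector $\prod_v A_v$ therefore acts on $\mathcal{H}_{\mathrm{flat}}$ as the standard averaging projector onto the $G^V$-invariant subspace. Since $G^V$ permutes an orthonormal basis, this invariant subspace has an orthogonal basis of normalized orbit sums
\begin{align}
\ket{O} = \frac{1}{\sqrt{|O|}} \sum_{\{g_e\} \in O} \ket{\{g_e\}},
\end{align}
one per $G^V$-orbit $O$ on the set of flat connections, so $\dim \mathcal{C}[\mathcal{D}(G)]$ equals the number of such orbits.

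The substantive remaining step is to show that these orbits are in bijection with $\mathrm{Hom}(\pi_1(\Sigma),G)/\mathrm{Ad}_G$. I would fix a basepoint $v_0 \in V$ and a spanning tree $T$ of the $1$-skeleton of $\Sigma$. The subgroup of gauge transformations acting trivially at $v_0$ can be used to fix the group element on every tree edge to the identity, producing a canonical \emph{tree-gauge} representative in every $G^V$-orbit. For such a representative, each co-tree edge $e$ determines a loop through $v_0$ by closing $e$ up through $T$, and these loops furnish a generating set for $\pi_1(\Sigma,v_0)$. The flatness constraint at each plaquette translates, under this correspondence, into the defining relators of a cellular presentation of $\pi_1(\Sigma,v_0)$. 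Hence tree-gauged flat configurations are in bijection with $\mathrm{Hom}(\pi_1(\Sigma,v_0),G)$; the residual gauge freedom is the diagonal $G$ at $v_0$, which acts on the holonomy homomorphism by simultaneous conjugation. Passing to orbits gives the claimed formula \eqref{eq: dim_codespace}.

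The main obstacle is the last step: checking carefully that (i) after the tree-gauge fix the plaquette constraints become \emph{exactly} the defining relators of a CW presentation of $\pi_1(\Sigma,v_0)$, and (ii) the residual gauge action at $v_0$ reduces to simultaneous conjugation on $\mathrm{Hom}(\pi_1(\Sigma,v_0),G)$. Both statements are standard in lattice gauge theory, but they require an orientation-consistent dictionary between co-tree edges and generators of $\pi_1(\Sigma,v_0)$ and between oriented plaquette boundaries and relators; this is most cleanly handled by working with the dual CW structure whose $2$-cells are the plaquettes of $\Sigma$.
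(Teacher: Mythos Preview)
Your proposal is correct and follows essentially the same approach as the paper: both identify $\mathcal{C}[\mathcal{D}(G)]$ with gauge-equivalence classes of flat connections, pick a basepoint $v_0$ and spanning tree $T$, and identify the residual gauge freedom at $v_0$ with conjugation on $\mathrm{Hom}(\pi_1(\Sigma),G)$. The only cosmetic difference is that you fix tree-gauge (set tree edges to $e$) and read off $\pi_1$ via the CW presentation (co-tree edges as generators, plaquette boundaries as relators), whereas the paper leaves the tree edges free, defines the holonomy map $\Phi:S\to\mathrm{Hom}(\pi_1(\Sigma),G)$ directly, and shows it is surjective and $|G|^{|V|-1}$-to-$1$; these are two equivalent bookkeeping choices for the same argument. (One small slip: the plaquettes are the $2$-cells of $\Sigma$ itself, not of the dual complex, so your final remark about passing to the dual CW structure is unnecessary.)
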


\begin{proof}
Consider the group basis $\ket{g} = \bigotimes_{e \in E} \ket{g_e}$, where each edge $e \in E$ is assigned a group element $g_e \in G$. Let $\gamma$ be an oriented path in $\Sigma$, represented as an ordered sequence of connected edges. We define  
\begin{equation}
    g_\gamma := \prod_{e \in \gamma} g_e^{\sigma(e,\gamma)},
\end{equation}
where $\sigma(e,\gamma) = +1$ if the orientation of $e$ agrees with that of $\gamma$, and $\sigma(e,\gamma) = -1$ otherwise.

For the ground state of the quantum double model, the condition $B_p |\Psi_{\mathrm{QD}}\rangle = |\Psi_{\mathrm{QD}}\rangle$ requires $g_{\partial p} = e$, where $e \in G$ is the identity element and $\partial p$ denotes the boundary of plaquette $p$ oriented counterclockwise. The $B_p$ stabilizers thus define the subspace
\begin{align}
S &= \left\{ \ket{g} \,\big|\, g_{\partial p} = e,~ \forall p \in F \right\} \nonumber \\
  &= \left\{ \ket{g} \,\big|\, g_\gamma = e,~ \forall \text{ contractible loops } \gamma \right\},
\end{align}
And we have 
\begin{equation}
    B_p |g \rangle = |g\rangle, \ \forall |g\rangle \in S.
\end{equation}

Next, consider the gauge transformation operator $A_v^h$ for $\forall h \in G$.  
For any state $\ket{g} \in S$, it is straightforward to verify that  
\begin{align}
    A_v^h \ket{g} = \ket{g'} \in S,
\end{align}
since, by definition in Eq.~(2) in the main text, the action of $A_v^h$ does not change the group element associated with any \emph{contractible loop}.  

Therefore, two states $\ket{g}, \ket{g'} \in S$ are said to be \textit{gauge equivalent} if they can be related by gauge transformations acting on a finite set of vertices. We denote this equivalence by $\ket{g} \sim \ket{g'}$, and define the corresponding equivalence classes of $S$ as  
\begin{align}
    \ket{[g]} \propto \sum_{\ket{g'} \sim \ket{g}} \ket{g'} \in [S].
\end{align}
It follows directly that $\ket{[g]}$ is invariant under $A_v^h$ for all $h \in G$. Hence, the $A_v$ terms also serve as stabilizers of these states, and the set $\left\{ \ket{[g]} \,\big|\, [g] \in [S] \right\}$ forms a basis of the codespace $\mathcal{C}[\mathcal{D}(G)]$.

Choose an arbitrary vertex $v_0$ as a base point of the square lattice defined upon the topological space $\Sigma$ and a maximal spanning tree $T$ (a maximal subgraph of $\Sigma$ that does not contain any loops, with exactly $m=|V|-1$ edges) containing $v_0$. 

Define a map:
\begin{align}
\Phi:~S\longrightarrow\mathrm{Hom}(\pi_1(\Sigma),G).
\end{align}
Let $\gamma$ be an arbitrary closed loop starting and ending at $v_0$, which may be either contractible or non-contractible.  
For any $\ket{g} \in S$, define
\begin{equation}
    \Phi(\ket{g})([\gamma]) = g_\gamma,
\end{equation}
which maps a closed path $\gamma$ to its \textit{holonomy}, defined as the product of the group elements along the loop.  

By definition, the holonomy of any contractible loop $\gamma_0$ must be trivial: $g_{\gamma_0} = e$. This implies that $\Phi(\ket{g})$ depends only on the homotopy class of $\gamma \in \pi_1(\Sigma, v_0)$. Therefore, the map $\Phi(\ket{g})$ is well-defined as a homomorphism from $\pi_1(\Sigma, v_0)$ to $G$.

Next, we show that the set of gauge-equivalence classes $[S]$ is in one-to-one correspondence with the orbits of $\mathrm{Hom}(\pi_1(\Sigma),G)$ under the conjugation action of $G$.  

Fix $\phi \in \mathrm{Hom}(\pi_1(\Sigma),G)$. To construct a preimage $\ket{g} \in S$ of $\phi$, consider any edge $e \notin T$ with endpoints $\partial_0 e$ and $\partial_1 e$. By construction of the maximal spanning tree $T$, there exist unique paths $\gamma_0, \gamma_1 \subset T$ connecting $v_0$ to $\partial_0 e$ and $\partial_1 e$, respectively. Define $\bar{\gamma}_i$ as the reversed path of $\gamma_i$, the loop $\gamma = \gamma_0  e  \bar{\gamma}_1$ is a closed path based at $v_0$. The holonomy condition requires that
\begin{align}
    g_{\gamma_0} g_e g_{\bar{\gamma}_1} = \phi(\gamma).
\end{align}
Thus, for each edge $e \notin T$, there is a unique assignment of $g_e \in G$ consistent with $\phi$, while the group elements on the edges of $T$ remain arbitrary. Since $T$ contains $m$ edges, there are $|G|^m$ distinct preimages $\ket{g}$ associated with a fixed $\phi$. Hence, $\Phi$ is surjective and $|G|^m$-to-$1$.  

Next, consider gauge transformations. If $\ket{g}$ and $\ket{g'}$ differ by gauge transformations on vertices other than $v_0$, the holonomies along closed loops remain unchanged, so $\Phi(\ket{g}) = \Phi(\ket{g'})$. On the other hand, performing a gauge transformation $A_{v_0}^h$ at the root vertex $v_0$ modifies every holonomy by conjugation:
\begin{align}
    \Phi\!\left(A_{v_0}(h)\ket{g}\right)(\gamma) = h \, \Phi(\ket{g})(\gamma) \, h^{-1},
\end{align}
since each loop $\gamma$ both begins and ends at $v_0$.  

Therefore, gauge-equivalence classes $[S]$ correspond precisely to orbits of $\mathrm{Hom}(\pi_1(\Sigma),G)$ under the conjugation action of $G$. We thus have
\begin{align}
    \mathrm{dim}\ C[\mathcal{D}(G)] = |[S]| = |\mathrm{Hom}(\pi_1(\Sigma),G) / \mathrm{Ad}_G|.
\end{align}
\end{proof}

As an illustrative example, let us explicitly compute the dimension of the codespace of the quantum double $\mathcal{D}(S_3)$ on the torus $\Sigma \simeq T^2$. In this case,  
\begin{align}
    \mathrm{Hom}(\pi_1(T^2), S_3) 
    = \left\{ (g,h) \in S_3 \times S_3 \,\big|\, gh = hg \right\}.
\end{align}
Equivalently, this set consists of pairs where $h$ lies in the centralizer of $g$, $Z_g = \{h \in S_3 \mid hg = gh \}$. Counting such commuting pairs gives a total of $18$ elements.

Then we consider the $G$-orbits on $\mathrm{Hom}(\pi_1(T^2), S_3)$ under conjugation, which leads to the following pairs up to global conjugation,
\begin{equation}
    \begin{aligned}
    &(e, e),\ (e, c),\ (e, t),\ (c, c), \\
    &(c, c^2),\ (c, e),\ (t, e),\ (t, t),
    \end{aligned}
\end{equation}
where $t, c \in S_3$, $c^3 = t^2 = e$, and $t c t = c^2$. Therefore, the dimension of the codespace $\mathcal{C}[\mathcal{D}(S_3)]$ is:
\begin{align}
\left|\mathrm{Hom}(\pi_1(T^2),S_3)/\mathrm{Ad}_{S_3}\right|=8.
\end{align}
Motivated by this example, we state the following corollary, which relates the codespace dimension to the different types of anyons.
\begin{corollary}
For a lattice defined on a surface $\Sigma$ that is topologically equivalent to a torus $\Sigma \simeq T^2$, the dimension of the codespace of Kitaev’s $\mathcal{D}(G)$ quantum double is equal to the number of distinct anyon types.
\begin{align}
\left|\mathrm{Hom}(\pi_1(T^2),G)/\mathrm{Ad}_G\right|=\#\text{ types of anyons.}
\end{align}
\end{corollary}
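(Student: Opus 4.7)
The plan is to decompose the conjugation-orbit count of $\mathrm{Hom}(\pi_1(T^2),G)$ conjugacy-class by conjugacy-class, and then identify each contribution with the well-known labeling of anyons in $\mathcal{D}(G)$ by pairs $(C,\Gamma)$, where $C$ is a conjugacy class of $G$ and $\Gamma$ is an irreducible representation of the centralizer $Z_{g_C}$ of a representative $g_C\in C$.

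First, I would use the standard presentation $\pi_1(T^2)\cong\mathbb{Z}^2$ to identify
\begin{equation}
\mathrm{Hom}(\pi_1(T^2),G)\;\cong\;\bigl\{(g,h)\in G\times G\;\big|\;gh=hg\bigr\},
\end{equation}
where the two coordinates record the holonomies along the two generating cycles, and conjugation by $G$ acts diagonally, $(g,h)\mapsto(xg\bar{x},xh\bar{x})$. The orbits of this diagonal action are precisely what appear on the left-hand side of the claimed identity.

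Next, I would partition the set of commuting pairs according to the conjugacy class of the first coordinate. Fixing a representative $g_C$ for each class $C$, every diagonal orbit with first coordinate in $C$ has a unique representative of the form $(g_C,h)$ with $h\in Z_{g_C}$ (since the stabilizer of $g_C$ under conjugation is exactly $Z_{g_C}$). Two such representatives $(g_C,h)$ and $(g_C,h')$ lie in the same diagonal orbit if and only if $h$ and $h'$ are conjugate inside $Z_{g_C}$. Hence
\begin{equation}
\bigl|\mathrm{Hom}(\pi_1(T^2),G)/\mathrm{Ad}_G\bigr|
=\sum_{C}\bigl|\mathrm{Conj}(Z_{g_C})\bigr|,
\end{equation}
where the sum runs over conjugacy classes $C$ of $G$ and $\mathrm{Conj}(Z_{g_C})$ denotes the set of conjugacy classes of $Z_{g_C}$.

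Finally, I would invoke two standard facts: (i) for any finite group $H$, the number of conjugacy classes of $H$ equals the number of irreducible representations, i.e.\ $|\mathrm{Conj}(Z_{g_C})|=|\mathrm{Irr}(Z_{g_C})|$; and (ii) as reviewed in the $S$-matrix section, the simple objects of $\mathcal{Z}(\mathrm{Rep}(G))$ -- equivalently the anyon types of $\mathcal{D}(G)$ -- are in bijection with pairs $(C,\Gamma)$ with $\Gamma\in\mathrm{Irr}(Z_{g_C})$. Combining (i), (ii), and the previous display immediately yields
\begin{equation}
\bigl|\mathrm{Hom}(\pi_1(T^2),G)/\mathrm{Ad}_G\bigr|
=\sum_{C}|\mathrm{Irr}(Z_{g_C})|
=\#\text{ types of anyons},
\end{equation}
which, together with Theorem~\ref{Thm: codespace} applied to $\Sigma=T^2$, gives the corollary. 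The only step that requires any care is the orbit-counting argument in the second paragraph; everything else is a direct appeal to facts already quoted in the paper, so I expect no serious obstacle.
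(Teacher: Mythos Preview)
Your proposal is correct and follows essentially the same route as the paper's proof: identify $\mathrm{Hom}(\pi_1(T^2),G)$ with commuting pairs, stratify orbits by the conjugacy class of the first coordinate, reduce to conjugacy classes of the centralizer, and then invoke $|\mathrm{Conj}(H)|=|\mathrm{Irr}(H)|$ together with the $(C,\Gamma)$ labeling of anyons. Your orbit-counting step is in fact stated more carefully than the paper's (you make explicit that the stabilizer of $g_C$ is $Z_{g_C}$, which is why the residual action on $h$ is conjugation in $Z_{g_C}$), but the argument is the same.
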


\begin{proof}
From the previous discussion, the holonomies on the torus are labeled by  
\begin{align}
    \mathrm{Hom}(\pi_1(T^2),G) = \{(g,h) \in G \times G \,\big|\, hg = gh \}.
\end{align}
Consider two pairs $(g,h_1)$ and $(g,h_2)$ that satisfy the above condition, with $h_1 \neq h_2$. These two pairs are inequivalent if and only if $h_1$ and $h_2$ belong to different conjugacy classes of the centralizer group $Z_g$.  

Therefore, the holonomies can be labeled by
\begin{equation}
    (C_G, C_{Z_g}), \quad g \in C_G,
\end{equation}
where $C_G$ denotes the conjugacy class of $G$, and $C_{Z_g}$ denotes a conjugacy class in the centralizer $Z_g$, $g \in C_G$.  

In particular, it is well known that for any finite group $H$, the number of distinct irreducible representations equals the number of its conjugacy classes. Hence, we may relabel the elements of the codespace $\mathcal{C}[\mathcal{D}(G)]$ as
\begin{align}
    (C,R) \in \mathrm{Hom}(\pi_1(T^2),G) / \mathrm{Ad}_G,
\end{align}
where $C$ labels a conjugacy class of $G$, and $R$ labels an irrep of the corresponding centralizer $Z_C$.

On the other hand, in Kitaev’s quantum double model, a pair of anyons at $s_0$ and $s_1$ can be created by a ribbon operator $F_\xi^{C,R}$ along a path $\xi$ with $\partial \xi = \{s_0,s_1\}$. Distinct ribbon operators, corresponding to different anyon types, are labeled by $(C,R)$, where $C$ denotes a conjugacy class of $G$ and $R$ an irrep of the centralizer $Z_C$.  

Therefore, we conclude that on torus, the codespace $\mathcal{C}[\mathcal{D}(G)]$ is in one-to-one correspondence with the different anyon types.
\end{proof}

Now we turn to the mixed states $\rho_{\mathrm{cl}}$ obtained by the decoherence Eq.~(9), with the corresponding 1-form strong and weak symmetries defined as Eqs.~(8) and (13), as illustrated in the main text.  

\section{Proof of Theorem 2}

\begin{proof}
From the previous discussion, we know that in the pure-state quantum double model, the degenerate ground states are labeled by the holonomies of distinct non-contractible loops, as given in Eq.~\eqref{eq: dim_codespace}.

First, consider decohering the ground state without non-contractible loops, $|e, e\rangle$. This state can be expressed as a superposition of all contractible loops labeled by $F^{g,e}_{\xi}$, where $\xi$ is any contractible ribbon. The corresponding density matrix takes the form
\begin{equation}
\rho_{\mathrm{QD}}^{e,e} = |e, e\rangle \langle e, e| \propto \sum_{\mathbf{l}, \mathbf{l}'} |\mathbf{l}\rangle \langle \mathbf{l}'|,
\end{equation}
where each $|\mathbf{l}\rangle$ denotes a contractible loop configuration. Summing over all contractible loop configurations yields the density matrix of the trivial ground state.

When $\mathbf{l} \neq \mathbf{l}'$, there must exist a site $s$ such that $|g_s\rangle \neq |g_s'\rangle$, where $|g_s\rangle$ and $|g_s'\rangle$ denote the states at site $s$ for $|\mathbf{l}\rangle$ and $|\mathbf{l}'\rangle$, respectively. We then have
\begin{equation}
    \begin{aligned}
        &\quad \ \sum_{\Gamma} d_{\Gamma} Z_{\Gamma,s, \alpha, \alpha'} | \mathbf{l} \rangle \langle \mathbf{l}'| Z_{\Gamma, s, \alpha, \alpha'}^{\dagger} \\
        &= \sum_{\Gamma} d_{\Gamma} \chi_{\Gamma}(g_s \bar{g}_s') | \mathbf{l} \rangle \langle \mathbf{l}'| \\
        &= \delta_{g_s \bar{g}_s',e} |G| | \mathbf{l} \rangle \langle \mathbf{l}'|.
    \end{aligned}
\end{equation}
From the second to the third line, we apply the great orthogonality theorem. When $g_s \neq g_s'$, the expression above vanishes by this theorem. Consequently, the density matrix after applying the full channel also vanishes, since $\mathcal{N} = \mathcal{N}_{\bar{s}} \circ \mathcal{N}_{s}$, with $\bar{s}$ denoting the complement of $s$.

Therefore, after decoherence, the trivial ground state $\rho_{\mathrm{QD}}^{e,e}$ reduces to a sum over classical loop configurations.
\begin{equation}
    \begin{aligned}
        \mathcal{N} [\rho_{\mathrm{QD}}^{e,e}] \propto \sum_{\mathbf{l}} | \mathbf{l} \rangle \langle \mathbf{l}|, \label{eq: classical_loop}
    \end{aligned}
\end{equation}
for all contractable loops $\mathbf{l}$.

In the general case, we consider a quantum double ground state of the form
\begin{equation}
|\Psi_{\mathrm{QD}}\rangle \propto \sum_{i=1}^{|\mathcal{C}[\mathcal{D}(G)]|} c_i  |g_i, h_i\rangle,
\end{equation}
where $g_i, h_i \in G$, and $|g_i, h_i\rangle$ denotes a distinct ground state for each $i$ satisfying Eq.~\eqref{eq: dim_codespace}, with $c_i$ being arbitrary coefficients.

By definition, the state above can be expressed as,
\begin{equation}
    \begin{aligned}
        |\Psi_{\mathrm{QD}}\rangle \propto \sum_{i=1}^{|\mathcal{C}[\mathcal{D}(G)]|} c_i F^{g_i, h_i}_{\xi_x} F^{h_i, e}_{\xi_y} |e, e\rangle,
    \end{aligned}
\end{equation}
where $\xi_x$ and $\xi_y$ denote two distinct non-contractable ribbons. We have
\begin{equation}
    \begin{aligned}
        \rho_{\mathrm{QD}} &\propto \sum_{i,i'} c_i c_{i'}^* |g_i, h_i \rangle \langle g_{i'}, h_{i'} | \\
        &\propto \sum_{i,i'} c_i c_{i'}^* F^{g_i, h_i}_{\xi_x} F^{h_i, e}_{\xi_y} |e, e \rangle \langle e, e | F^{\bar{h}_{i'}, e}_{\xi_y} F^{\bar{g}_{i'}, \bar{h}_{i'}}_{\xi_x}.
    \end{aligned}
\end{equation}
Consider two sites $s_1 \in \xi_x$ and $s_2 \in \xi_y$. We have
\begin{equation}
    \begin{aligned}
       \mathcal{N}_{s_1}[\rho_{\mathrm{QD}}] &\propto \sum_{\Gamma} d_{\Gamma} Z_{\Gamma, \alpha, \alpha'} \rho_{QD} Z_{\Gamma, \alpha, \alpha'}^{\dagger} \\
       &\propto \sum_{\Gamma} d_{\Gamma} \chi_{\Gamma}(g_i \bar{g}_{i'}) \sum_{\mathbf{l}}|\mathbf{l} \rangle \langle \mathbf{l} |.
    \end{aligned}
\end{equation}
From the first line to the second line we use the commutation property discussed in Eq.~\eqref{eq:conmmutation}, as well as the previous result Eq.~\eqref{eq: classical_loop}. By the great orthogonality theorem, the expression above vanishes when $g_i \neq g_{i'}$, which in turn causes the entire channel to vanish since $\mathcal{N} = \mathcal{N}_{s_1} \circ \mathcal{N}_{\bar{s}_1}$. Hence, the non-vanishing elements of the decohered density matrix must satisfy $g_i = g_{i'}$. A similar argument applies to the other label $h$. Therefore, the decohered density matrix can, in general, be written in the following form:
\begin{equation}
    \begin{aligned}
        \mathcal{N}[\rho_{\mathrm{QD}}] \propto \sum_{i, \mathbf{l}}^{|C[\mathcal{D}(G)]|} p_i |\mathbf{l}^{g_i, h_i} \rangle \langle \mathbf{l}^{g_i, h_i} |,
    \end{aligned}
\end{equation}
where $g_i, h_i \in G$ satisfy Eq.~\eqref{eq: dim_codespace}, and $|\mathbf{l}^{g_i, h_i} \rangle$ denotes a loop configuration with non-contractable loop labeled by $(g_i, h_i)$. The density matrix $\rho_i \propto \sum_{\mathbf{l}} |\mathbf{l}^{g_i, h_i} \rangle \langle \mathbf{l}^{g_i, h_i} |$ is an extremal point of the information convex set and satisfies
\begin{align}
\tr(\rho_i \rho_{i'}) = \delta_{i,i'} .
\end{align}
Therefore, it is straightforward to see that
\begin{equation}
\mathrm{Ext}(\mathcal{Q}\{\rho_{\mathrm{cl}}\}) \cong \mathrm{Hom}(\pi_1(\Sigma),G)/\mathrm{Ad}_G,
\end{equation}
\end{proof}

\section{Example: Decohering $S_3$ quantum double}

The group of permutations on a set of three elements $S_3$ is isomorphic to the dihedral group $D_3$, the symmetry of an equilateral triangle. There are two generators $c$ and $t$ satisfying the following relations
\begin{align}
    c^3 = t^2 = e,\quad tct = c^2,
\end{align}
where $e$ is the identity element. According to the previous discussion, the ground state of $S_3$ quantum double on torus are labeled as follows,
\begin{equation}
    \begin{aligned}
        &|e, e\rangle, \ |e, c\rangle,\ |e, t\rangle,\ |c, c\rangle, \\
        &|c, c^2 \rangle,\ |c, e\rangle,\ |t, e\rangle,\ |t, t\rangle.
    \end{aligned}
\end{equation}
The ground state degeneracy is 8, which matches the number of anyons, as we discussed before. Different ground states can be distinguished by the eigenvalue of $F^{\Gamma}_{\xi}$ operator with $\xi$ to be a non-contractable loop.

There are three irreducible representations for $S_3$, which are the trivial representation $\mathbf{1}$, the sign representation $s$, and a 2-dimensional representation $\pi$. They have the following fusion rules,
\begin{table}[h]
\centering
\resizebox{0.2\columnwidth}{!}{
\begin{tabular}{|c|c c c|}
\hline
$\otimes$ & $\mathbf{1}$ & $s$ & ${\pi}$\\ \hline
$\mathbf{1}$ & $\mathbf{1}$ & $s$ & ${\pi}$\\ 
$s$     & $s$ & $\mathbf{1}$ &${\pi}$\\ 
${\pi}$ & ${\pi}$ & ${\pi}$ & $\mathbf{1} \oplus s \oplus {\pi}$\\ \hline
\end{tabular}
}.
\caption{The fusion table of $\mathrm{Rep}(S_3)$ category.}
\end{table}
The maximally local decoherence channel is thus given by
\begin{equation}
    \begin{aligned}
        \mathcal{N}_{e} [\rho] = \frac{\rho + Z_e \rho Z_e + 2 Z_{\pi,e, \alpha, \alpha'} \rho Z_{\pi,e, \alpha, \alpha'}^{\dagger}}{6},
    \end{aligned}
\end{equation}
and $\mathcal{N} = \bigotimes_{e} \mathcal{N}_{e}$. The generalized $Z$ operators 
\begin{equation}
        \begin{aligned}
            Z_e &= \frac{1}{6} \left(F_{e}^{e,e} + F_{e}^{e,c} + F_{e}^{e,c^2} - F_{e}^{e,t} - F_{e}^{e,tc} - F_{e}^{e,tc^2}\right), \label{eq:B_anyon}\\
        Z_{\Gamma,e} &= \frac{1}{3} \left[ \begin{pmatrix}
        1 & 0 \\
        0 & 1 
    \end{pmatrix} F_{e}^{e,e} + \begin{pmatrix}
        \bar{\omega} & 0 \\
        0 & \omega 
    \end{pmatrix} F_{e}^{e,c} + \begin{pmatrix}
        \omega & 0 \\
        0 & \bar{\omega} 
    \end{pmatrix} F_{e}^{e,c^2} + \begin{pmatrix}
        0 & 1 \\
        1 & 0 
    \end{pmatrix} F_{e}^{e,t} + \begin{pmatrix}
        0 & \omega \\
        \bar{\omega} & 0 
    \end{pmatrix} F_{e}^{e,tc} + \begin{pmatrix}
        0 & \bar{\omega} \\
        \omega & 0 
    \end{pmatrix} F_{e}^{e,tc^2}\right],
        \end{aligned}
    \end{equation}
where $\omega = e^{2 \pi i/3}$. The two non-trivial irreducible representations correspond to the electric charges $B$ and $C$ in the $S_3$ quantum double. Together with the vacuum $A$, these anyons form a closed subcategory and braid trivially with one another. Consequently, after incoherent proliferation, the electric 1-form symmetries remain strong, while the magnetic 1-form symmetries are reduced to weak.

Moreover, the $S_3$ quantum double exhibits a generalized $e$–$m$ duality, namely the $C$–$F$ duality~\cite{beigi2010quantum-2ab,li2024domain}, which exchanges the non-Abelian electric charge $C$ with the non-Abelian magnetic flux $F$. This allows for an alternative choice of Kraus operators that incoherently proliferate the $A$, $B$, and $F$ anyons. Since these anyons also form a closed subcategory and braid trivially with one another, the following symmetries remain strong after decoherence,
\begin{equation}
    \begin{aligned}
        F^{s}_{\xi} &= \left(F_{\xi}^{e,e} + F_{\xi}^{e,c} + F_{\xi}^{e,c^2} - F_{\xi}^{e,t} - F_{\xi}^{e,tc} - F_{\xi}^{e,tc^2}\right), \\
         F^{[c]}_{\xi} &= \frac{1}{3} \left[\begin{pmatrix}
        F_{\xi}^{c^2,e} & F_{\xi}^{c^2, t} \\
        F_{\xi}^{c,t} & F_{\xi}^{c,e}
    \end{pmatrix} + \begin{pmatrix}
        F_{\xi}^{c^2,c}& F_{\xi}^{c^2, tc^2} \\
        F_{\xi}^{c,tc} & F_{\xi}^{c,c^2}
    \end{pmatrix} + \begin{pmatrix}
        F_{\xi}^{c^2,c^2} & F_{\xi}^{c^2, tc} \\
        F_{\xi}^{c,tc^2} & F_{\xi}^{c,c}
    \end{pmatrix}\right],        
    \end{aligned}
\end{equation}
whereas the non-Abelian electric 1-form symmetry $F^{\pi}_{\xi}$ and the non-Abelian magnetic 1-form symmetry $F^{t}_{\xi}$ are reduced to weak for any closed ribbon $\xi$.

\section{Example: Decohering $D_4$ quantum double}

The group of $D_4$ is the symmetry of a square. There are two generators $c$ and $t$ satisfying the following relations
\begin{align}
    c^4 = t^2 = e,\quad tct = c^3,
\end{align}
where $e$ is the identity element. According to the previous discussion, the ground state of $D_4$ quantum double on torus are labeled as follows,
\begin{equation}
    \begin{aligned}
        &|e, e\rangle,\ |e, c\rangle,\ |e, c^2\rangle,\ |e, t\rangle,\ |e, tc\rangle \\
        &|c, e\rangle,\ |c, c\rangle,\ |c, c^2\rangle,\ |c, c^3\rangle, \\
        &|c^2, e\rangle,\ |c^2, c\rangle,\ |c^2, c^2\rangle,\ |c^2, t\rangle,\ |c^2, tc\rangle, \\
        &|t, e\rangle,\ |t, t\rangle,\ |t, c^2\rangle,\ |t, tc^2\rangle, \\
        &|tc, e\rangle,\ |tc, tc\rangle,\ |tc, tc^3\rangle,\ |tc, c^2\rangle.
    \end{aligned}
\end{equation}
In total there are 22 distinct degenerate ground states. There are 5 types of electric charges, and the fusions between them form a fusion category $\mathrm{Rep}(D_4)$. The fusion table is given as follows,
\begin{table}[h]
\centering
\resizebox{0.3\columnwidth}{!}{
\begin{tabular}{|c|c c c c c|}
\hline
$\otimes$ & $\mathbf{1}$ & $s_1$ & $s_2$ & $s_3$ & ${\pi}$\\ \hline
$\mathbf{1}$ & $\mathbf{1}$ & $s_1$ & $s_2$ & $s_3$ & ${\pi}$\\ 
$s_1$ & $s_1$ & $s_2$ & $s_3$ & $\mathbf{1}$ &${\pi}$\\ 
$s_2$ & $s_2$ & $s_3$ & $\mathbf{1}$ & $s_1$ &${\pi}$\\
$s_3$ & $s_3$ & $\mathbf{1}$ & $s_1$ & $s_2$  &${\pi}$\\
${\pi}$ & ${\pi}$ & ${\pi}$ & ${\pi}$ & $\pi$ & $\mathbf{1} \oplus s_1 \oplus s_2 \oplus s_3$ \\ \hline
\end{tabular}
}
\caption{The fusion table of $\mathrm{Rep}(D_4)$ category.}
\end{table}
and they braid trivially with each other. The generalized $Z$ operators are given as follows,
\begin{equation}
    \begin{aligned}
        Z_{s_i,e} &= \frac{1}{8} \left(F_{e}^{e,e} + \omega^i F_{e}^{e,c} + \omega^{2i} F_{e}^{e,c^2} + \omega^{3i} F_{e}^{e,c^3} + F_{e}^{e,t} + \omega^i F_{e}^{e,tc} + \omega^{2i} F_{e}^{e,tc^2} + \omega^{3i} F_{e}^{e,tc^3}\right), \\
        Z_{\pi,e} &= \frac{1}{4} \begin{pmatrix}
        F_{e}^{e,e} - i F_{e}^{e,c} - F_{e}^{e,c^2} + i F_{e}^{e,c^3}  & F_{e}^{e,t} + i F_{e}^{e,tc} - F_{e}^{e,tc^2} - i F_{e}^{e,tc^3} \\
        F_{e}^{e,e} - i F_{e}^{e,tc} - F_{e}^{e,tc^2} + i F_{e}^{e,tc^3} & F_{e}^{e,e} + i F_{e}^{e,c} - F_{e}^{e,c^2} - i F_{e}^{e,c^3}
    \end{pmatrix}.
    \end{aligned}
\end{equation}
Therefore, the maximally decoherence channel is given by $\mathcal{N} = \bigotimes_{e} \mathcal{N}_e$, where
\begin{equation}
    \begin{aligned}
        \mathcal{N}_{e}[\rho] = \frac{\rho + \sum_{i=1}^{3} Z_{s_i,e} \rho Z_{s_i,e}^{\dagger} + 2 Z_{\pi, e, \alpha, \alpha'} \rho Z_{\pi, e, \alpha, \alpha'}^{\dagger}}{8}
    \end{aligned}
\end{equation}

\end{document}